\begin{document}
	\title{Automated Termination Analysis of Polynomial Probabilistic Programs \thanks{
			This research was supported by the WWTF ICT19-018 grant ProbInG, the ERC Starting Grant
SYMCAR 639270, the ERC AdG Grant FRAPPANT 787914, and the Austrian FWF project W1255-N23.
	}}
	%
	%
	\author{
		Marcel Moosbrugger\inst{1}\textsuperscript{(\Letter)}\orcidID{0000-0002-2006-3741}\and
		Ezio Bartocci\inst{1}\orcidID{0000-0002-8004-6601} \and \\
		Joost-Pieter Katoen\inst{2}\orcidID{0000-0002-6143-1926} \and
		Laura Kovács\inst{1}\orcidID{0000-0002-8299-2714}
	}
	\authorrunning{M. Moosbrugger et al.}
	%
	\institute{
		TU Wien, Vienna, Austria\\
		\email{marcel.moosbrugger@tuwien.ac.at}
		\and
		RWTH Aachen University, Aachen, Germany
	}
	\maketitle              
	\begin{abstract}
		The termination behavior of probabilistic programs depends on the outcomes of random assignments.
Almost sure termination (AST) is concerned with the question whether a program terminates with probability one on all possible inputs.
Positive almost sure termination (PAST) focuses on termination in a finite expected number of steps.
This paper presents a fully automated approach to the termination analysis of probabilistic while-programs whose guards and expressions are polynomial expressions.
As proving (positive) AST is undecidable in general, existing proof rules typically provide sufficient conditions.
These conditions mostly involve constraints on supermartingales.
We consider four proof rules from the literature and extend these with
generalizations of existing proof rules for (P)AST. 
We automate the resulting set of proof rules by 
effectively computing asymptotic bounds on polynomials over the program variables.
These bounds are used to decide the sufficient conditions -- including the constraints on supermartingales -- of a proof rule.
Our software tool \amber{}
can thus check AST, PAST, as well as their negations for a large class of polynomial probabilistic programs, while carrying out the termination reasoning fully with polynomial witnesses.
Experimental results show the merits of our
generalized proof rules and demonstrate that
\amber{} can handle probabilistic programs that are out of reach for other state-of-the-art tools.

		\keywords{Probabilistic Programming \and Almost sure Termination \and Martingales \and Asymptotic Bounds \and Linear Recurrences}
	\end{abstract}

	\section{Introduction}
\label{section:introduction}
\newacronym{pp}{PP}{probabilistic program}
\newacronym{ast}{AST}{almost surely terminating}
\newacronym{past}{PAST}{positively almost surely terminating}


\paragraph{Classical program termination.}
Termination is a key property in program analysis~\cite{DBLP:journals/cacm/CookPR11}.
The question whether a program terminates on all possible inputs -- the universal halting problem -- is undecidable.
Proof rules based on ranking functions have been developed that impose sufficient conditions implying (non-)termination.
Automated termination checking has given rise to powerful software tools such as AProVE~\cite{DBLP:journals/jar/GieslABEFFHOPSS17} and NaTT~\cite{DBLP:conf/rta/0002KS14} (using term rewriting), and UltimateAutomizer~\cite{DBLP:conf/tacas/HeizmannCDGHLNM18} (using automata theory).
These tools have shown to be able to determine the termination of
several intricate programs. 
The industrial tool Terminator~\cite{cook_terminator_2006} has taken termination proving into practice and is able to prove termination -- or even more general liveness properties -- of e.g., device driver software.
Rather than seeking a single ranking function, it takes a disjunctive termination argument using sets of ranking functions.
Other results include termination proving methods for specific program
classes such as linear and polynomial programs, see, 
e.g.,~\cite{Bradley05,Hark20}.

\begin{figure}[t]
	\begin{subfigure}{0.23\linewidth}
		\centering
		\begin{minipage}{\linewidth}
			\begin{probloop}
				x := 10
				
				\CWhile{$x > 0$}{
					\pvar{x} := $\pvar{x} + 1$ \prob{\nicefrac{1}{2}} $\pvar{x} - 1$
				}
			\end{probloop}
		\end{minipage}
		\caption{}
		\label{1D-random-walk}
	\end{subfigure}
	\begin{subfigure}{0.23\linewidth}
		\centering
		\begin{minipage}{\linewidth}
			\begin{probloop}
				x := 10
				
				\CWhile{$x > 0$}{
					\pvar{x} := $\pvar{x} - 1$ \prob{\nicefrac{1}{2}} $\pvar{x} + 2$
				}
			\end{probloop}
		\end{minipage}
		\caption{}
		\label{not-AST-repulsing}
	\end{subfigure}
	\begin{subfigure}{0.25\linewidth}
		\centering
		\begin{minipage}{\linewidth}
			\begin{probloop}
				x := 0, y := 0
				
				\CWhile{$x^2 + y^2 < 100$} {
					\pvar{x} := $\pvar{x} + 1$ \prob{\nicefrac{1}{2}} $\pvar{x} - 1$
					
					\pvar{y} := $\pvar{y} + x$ \prob{\nicefrac{1}{2}} $\pvar{y} - x$
				}
			\end{probloop}
		\end{minipage}
		\caption{}
		\label{PAST-random}
	\end{subfigure}
	\begin{subfigure}{0.25\linewidth}
		\centering
		\begin{minipage}{\linewidth}
			\begin{probloop}
				x := 10, y := 0
				
				\CWhile{$x > 0$}{
					\pvar{y} := $\pvar{y} + 1$
					
					\pvar{x} := $\pvar{x} + 4y$ \prob{\nicefrac{1}{2}} $\pvar{x} - y^2$
				}
			\end{probloop}
		\end{minipage}
		\caption{}
		\label{PAST-rsm-limit}
	\end{subfigure}
	
	\caption{
		Examples of probabilistic programs in our probabilistic language.
		Program~\ref{1D-random-walk} is a symmetric 1D random walk. The program is \gls{ast} but not \gls{past}.
		Program~\ref{not-AST-repulsing} is not \gls{ast}.
		Programs~\ref{PAST-random} and \ref{PAST-rsm-limit}
		contain dependent variable updates with polynomial
		guards and both programs are \gls{past}.
	}
	\label{fig:motivating}
\end{figure}

\paragraph{Termination of probabilistic program.}
Probabilistic programs extend sequential programs with the ability to draw samples from probability distributions.
They are used e.g. for, encoding randomized algorithms, planning in AI, security mechanisms, and in cognitive science.
In this paper, we consider probabilistic while-programs with discrete probabilistic choices, in the vein of the seminal works~\cite{DBLP:journals/jcss/Kozen81} and~\cite{DBLP:series/mcs/McIverM05}.
Termination of probabilistic programs differs from the classical halting problem in several respects, e.g., probabilistic programs may exhibit diverging runs that have probability mass zero in total.
Such programs do not always terminate, but terminate with probability one -- they \emph{almost surely} terminate.
An example of such a program is given in Figure~\ref{1D-random-walk} where variable $x$ is incremented by $1$ with probability $\nicefrac{1}{2}$, and otherwise decremented with this amount.
This program encodes a one-dimensional (1D) left-bounded random walk starting at position $10$.
Another important difference to classical termination is that the expected number of program steps until termination may be infinite, even if the program almost surely terminates.
Thus, almost sure termination (\gls{ast}) does not imply that the expected number of steps until termination is finite.
Programs that have a finite expected runtime are referred to as \emph{positively almost surely} terminating (\gls{past}).
Figure~\ref{PAST-random} is a sample program that is \gls{past}.
While \gls{past} implies \gls{ast}, the converse does not hold, as
evidenced by Figure~\ref{1D-random-walk}: the program of
Figure~\ref{1D-random-walk} terminates with probability one but needs
infinitely many steps on average to reach $x{=}0$, hence is not \gls{past}.
(The terminology \gls{ast} and \gls{past} was coined in \cite{DBLP:conf/rta/BournezG05} and has its roots in the theory of Markov processes.)

\paragraph{Proof rules for AST and PAST.}
Proving termination of probabilistic programs is hard:
\gls{ast} for a single input is as hard as the universal halting problem,
whereas \gls{past} is even harder~\cite{kaminski_hardness_2015}. 
Termination analysis of probabilistic programs is currently attracting quite some attention.
It is not just of theoretical interest.
For instance, a popular way to analyze probabilistic programs in machine learning is by using some advanced form of simulation.
If, however, a program is not \gls{past}, the simulation may take forever.
In addition, the use of probabilistic programs in safety-critical environments~\cite{arora_netvisa_2013,bistline_bayesian_2015,fremont_scenic_2019} necessitates providing formal guarantees on termination.
Different techniques are considered for probabilistic program termination ranging from probabilistic term rewriting~\cite{DBLP:journals/scp/AvanziniLY20}, sized types~\cite{DBLP:journals/toplas/LagoG19}, and B\"{u}chi automata theory~\cite{ChenH20}, to weakest pre-condition calculi for checking \gls{past}~\cite{DBLP:journals/jacm/KaminskiKMO18}.
A large body of works considers \emph{proof rules} that provide sufficient conditions for proving \gls{ast}, \gls{past}, or their negations.
These rules are based on martingale theory, in particular supermartingales.
They are stochastic processes that can be (phrased in a simplified manner) viewed as the probabilistic analog of ranking functions: the value of a random variable represents the ``value'' of the function at the beginning of a loop iteration.
Successive random variables model the evolution of the program loop.
Being a supermartingale means that the expected value of the random variables at the end of a loop does not exceed its value at the start of the loop.
Constraints on supermartingales form the essential part of
proof rules.
For example, the \gls{ast} proof rule in~\cite{mciver_new_2017}
requires the existence of a supermartingale whose value decreases at
least with a certain amount by at least a certain probability on each
loop iteration. 
Intuitively speaking, the closer the supermartingales comes to zero -- indicating termination -- the more probable it is that it increases more.
The \gls{ast} proof rule in~\cite{mciver_new_2017} is applicable to
prove \gls{ast}
for the program in Figure~\ref{1D-random-walk}; yet, it cannot be used
to prove \gls{past} of  Figures~\ref{PAST-random}-\ref{PAST-rsm-limit}. 
On the other hand, the \gls{past} proof rule
in~\cite{chakarov_probabilistic_2013,ferrer_fioriti_probabilistic_2015}
requires that the expected decrease of the supermartingale 
on each loop iteration is at least some positive constant
$\epsilon$ and on loop termination needs to be at most zero -- very similar to the usual constraint on ranking functions.
While~\cite{chakarov_probabilistic_2013,ferrer_fioriti_probabilistic_2015}
can be used to prove the program in Figure~\ref{PAST-random} to be
\gls{past}, these works cannot be used for
Figure~\ref{1D-random-walk}. They cannot be used 
for proving  Figure~\ref{PAST-rsm-limit} to be \gls{past} either. 
The rule for showing non-\gls{ast}~\cite{chatterjee_stochastic_2017} requires the supermartingale to be repulsing.
This intuitively means that the supermartingale decreases on average with at least $\varepsilon$ and is positive on termination.
Figuratively speaking, it repulses terminating states.
It can be used to prove the program in Figure~\ref{not-AST-repulsing}
to be not \gls{ast}. 
In summary, while existing works for proving \gls{ast}, \gls{past}, and their
negations are generic in nature, they are also restricted for classes of
probabilistic programs. {\it In this paper, we propose
  relaxed versions of existing proof rules for probabilistic
  termination that turn out to treat quite a number of programs that
  could not be proven otherwise
  (Section~\ref{section:relaxations}).}
In particular, (non-)termination of all four programs of
Figure~\ref{fig:motivating} can be proven using our proof rules.

\paragraph{Automated termination checking of AST and PAST.}
Whereas there is a large body of techniques and proof rules, software tool support to automate checking termination of probabilistic programs is still in its infancy.
\emph{This paper presents novel algorithms to automate various proof
  rules for probabilistic programs:} the three aforementioned proof
rules~\cite{chakarov_probabilistic_2013,ferrer_fioriti_probabilistic_2015,mciver_new_2017,chatterjee_stochastic_2017}
and a variant of the non-\gls{ast} proof rule to prove
non-\gls{past}~\cite{chatterjee_stochastic_2017}\footnote{For automation, the proof rule of \cite{mciver_new_2017} is considered for constant decrease and probability functions.}.
We also present relaxed versions of each of the proof rules, going
beyond the state-of-the-art in the termination analysis of
probabilistic programs. 
We focus on so-called Prob-solvable loops,
extending~\cite{bartocci_automatic_2019}. Namely, we define
Prob-solvable loops as probabilistic while-programs whose guards
compare two polynomials (over program variables) and whose body is
a sequence of random assignments with polynomials as right-hand side
such that a variable $x$, say, only depends on variables preceding $x$
in the loop body. While
restrictive, Prob-solvable loops cover a vast set of interesting
probabilistic programs (see Remark~\ref{rem:ProbS}).
An essential property of our programs is that the statistical moments of program variables can be obtained as closed-form formulas~\cite{bartocci_automatic_2019}.
\emph{The key of our algorithmic approach is a procedure for computing
  asymptotic lower, upper and absolute bounds on polynomial
  expressions over program variables in our programs (Section~\ref{section:algorithms}).}
This enables a novel method for automating probabilistic termination
and non-termination proof rules based on (super)martingales,
going
beyond the state-of-the-art in probabilistic termination. Our relaxed
proof rules allow us to fully automate (P)AST analysis by using only polynomial witnesses. Our experiments provide
practical evidence that polynomial witnesses within Prob-solvable loops
are sufficient to certify most examples from the literature and even
beyond (Section~\ref{section:implementation}).

\paragraph{Our termination tool \amber{}.}
We have implemented our algorithmic approach in the publicly available
tool \amber{}.
It exploits asymptotic
bounds over polynomial martingales and uses the tool {\sc mora}~\cite{bartocci_automatic_2019} for
computing the first-order moments of program variables and the
computer algebra system package {\tt diofant}.
It employs over- and under-approximations realized by a simple
static analysis. \emph{\amber{} establishes probabilistic termination in a
fully automated manner} and has the following unique characteristics:
\begin{itemize}
\item it includes the first implementation of the \gls{ast} proof rule of~\cite{mciver_new_2017}, and
\item it is the first tool capable of certifying \gls{ast} for programs that are not \gls{past} and cannot be split into \gls{past} subprograms, and 
\item it is the first tool that brings the various proof rules under a single umbrella: \gls{ast}, \gls{past}, non-\gls{ast} and non-\gls{past}.
\end{itemize}
An experimental evaluation on various benchmarks shows that: (1)
\amber{} is superior to existing tools for automating
\gls{past}~\cite{ngo_bounded_2018} and
\gls{ast}~\cite{chakarov_probabilistic_2013}, (2) the relaxed proof
rules enable proving substantially more programs, and (3) \amber{} is
able to automate the termination checking of intricate probabilistic
programs (within the class of programs considered) that could not be
automatically handled so far
(Section~\ref{section:implementation}). For example, \emph{\amber{} solves
23 termination benchmarks that no other automated approach could so far handle.}

\paragraph{Main contributions.}
To summarize, the main contributions of this paper are:
\begin{enumerate}
\item Relaxed proof rules for (non-)termination, enabling treating a wider class of programs (Section~\ref{section:relaxations}).
\item Efficient algorithms to compute asymptotic bounds on polynomial expressions of program variables (Section~\ref{section:algorithms}).
\item Automation: a realisation of our algorithms in the tool \amber{} (Section~\ref{section:implementation}).
\item Experiments showing the superiority of \amber{} over existing
  tools for proving (P)AST (Section~\ref{section:implementation}).
\end{enumerate}
	
	\ifshort
		\section{Preliminaries}
\label{section:preliminaries}

\newacronym{mdp}{MDP}{Markov Decision Process}
\newacronym{mc}{MC}{Markov chain}
\newacronym{sm}{SM}{supermartingale}

We denote by $\N$ and $\R$ the set of natural and real numbers, respectively. Further, let $\overline{\R}$ denote $\R \union \{ + \infty, - \infty \}$, $\R_0^+$ the non-negative reals and $\R[x_1, \ldots, x_m]$ the polynomial ring in $x_1, \ldots, x_m$ over $\R$.
We write $x := E_{(1)}\ [p_1]\ E_{(2)}\ [p_2] \ldots [p_{m-1}]\ E_{(m)}$ for the probabilistic update of program variable $x$, denoting the execution of $x := E_{(j)}$ with probability $p_{j}$, for $j = 1, \ldots, m-1$, and the execution of $x := E_{(m)}$ with probability $1 - \sum_{j=1}^{m-1} p_{j}$, where $m \in \N$.
We write indices of expressions over program variables in round brackets and use $E_i$ for the stochastic process induced by expression $E$.
This section introduces our programming language extending \emph{Prob-solvable loops}~\cite{bartocci_automatic_2019} and defines the probability space introduced by such programs.
Let $\E$ denote the expectation operator with respect to a probability space.
We assume the reader to be familiar with probability theory~\cite{kemeny_denumerable_1976}.

\subsection{Programming Model: Prob-Solvable Loops}

Prob-solvable loops~\cite{bartocci_automatic_2019} are syntactically restricted probabilistic programs with polynomial expressions over program variables.
The statistical higher-order moments of program variables, like expectation and variance of such loops, can always be computed as functions of the loop counter.
In this paper, we extend {Prob-solvable loops} with polynomial loop guards in order to study their termination behavior, as follows.
\begin{definition}[Prob-solvable loop $\Loop$]\label{def:ProbSolvable}
	A \emph{Prob-solvable loop} $\Loop$ with real-valued variables $x_{(1)}, ..., x_{(m)}$, where $m \in \N$, is a program of the form: $\Init_\Loop\ while\ \Guard_\Loop\ do\ \Update_\Loop\ end$, with
	\begin{itemize}
		\item (Init) $\Init_\Loop$ is a sequence $x_{(1)} := r_{(1)}, ..., x_{(m)} := r_{(m)}$ of $m$ assignments, with $r_{(j)} \in \R$
		
		\item (Guard) $\Guard_\Loop$ is a strict inequality $P > Q$, where $P,\ Q \in \R[x_{(1)}, \ldots, x_{(m)}]$
		
		\item (Update) $\Update_\Loop$ is a sequence of $m$ probabilistic updates of the form
		$$x_{(j)} := a_{(j 1)} x_{(j)} + P_{(j 1)}\ [p_{j 1}]\ a_{(j 2)} x_{(j)} + P_{(j 2)}\ [p_{j 2}]\ ...\ [p_{j (l_j - 1)}]\ a_{(j l_j)} x_{(j)} + P_{(j l_j)},$$
		
		where $a_{(j k)}\in\R_0^+$ are constants, $P_{(j k)}\in\R[x_{(1)},...,x_{(j-1)}]$ are polynomials, $p_{(jk)}\in[0,1]$ and $\sum_{k}\ p_{jk}<1$.
	\end{itemize}
\end{definition}
If $\Loop$ is clear from the context, the subscript $\Loop$ is omitted from $\Init_\Loop$, $\Guard_\Loop$, and $\Update_\Loop$.
Figure~\ref{fig:motivating} gives four example Prob-solvable loops.

\begin{remark}[Prob-solvable expressiveness]\label{rem:ProbS}
	The enforced order of assignments in the loop body of Prob-solvable loops seems restrictive.
	However, many non-trivial probabilistic programs can be
	naturally modeled as succinct Prob-solvable loops.
	These include complex stochastic processes such as 2D random walks and dynamic Bayesian
	networks~\cite{bartocci2020analysis}.
	Almost all existing benchmarks on automated probabilistic termination analysis fall within the scope of Prob-solvable loops (cf. Section~\ref{section:implementation}).
\end{remark}

In the sequel, we consider an arbitrary Prob-solvable loop $\Loop$ and provide all definitions relative to $\Loop$.
The semantics of $\Loop$ is defined next, by associating $\Loop$ with a probability space.

\subsection{Canonical Probability Space}\label{sec:probspace}

A probabilistic program, and thus a Prob-solvable loop, can be semantically described as a probabilistic transition system~\cite{chakarov_probabilistic_2013} or as a probabilistic control flow graph~\cite{chatterjee_stochastic_2017}, which in turn induce an infinite \gls{mc}~\footnote{In fact, \cite{chatterjee_stochastic_2017} consider Markov decision processes, but in absence of non-determinism in Prob-solvable loops, Markov chains suffice for our purpose.}.
An \gls{mc} is associated with a \emph{sequence space}~\cite{kemeny_denumerable_1976}, a special probability space. 
In the sequel, we associate $\Loop$ with the sequence space of its corresponding \gls{mc}, similarly as in~\cite{hark_aiming_2020}.

\begin{definition}[State, Run of $\Loop$]
	The \emph{state} of Prob-solvable loop $\Loop$ over $m$ variables, is a vector $s \in \R^m$.
	Let $s[j]$ or $s[x_{(j)}]$ denote the $j$-th component of $s$ representing the value of the variable $x_{(j)}$ in state $s$.
	A \emph{run} $\vartheta$ of $\Loop$ is an infinite sequence of states.
\end{definition}
Note that any infinite sequence of states is a run.
Infeasible runs will however be assigned measure $0$.
We write $s \models B$ to denote that the logical formula $B$ holds in state $s$.
\begin{definition}[Loop Space of $\Loop$]
	The Prob-solvable loop $\Loop$ induces a canonical filtered probability space $(\Omega^\Loop, \Sigma^\Loop, (\F^\Loop_i)_{i \in \N}, \P^\Loop)$, called \emph{loop space}, where
	\begin{itemize}
		\item the \emph{sample space} $\Omega^\Loop := (\R^m)^\omega$ is the set of all program runs,
		
		\item the \emph{$\sigma$-algebra} $\Sigma^\Loop$ is the smallest $\sigma$-algebra containing all cylinder sets $Cyl(\pi) := \{ \pi \vartheta \mid \vartheta \in (\R^m)^\omega \}$ for all finite prefixes $\pi \in (\R^m)^+$, that is $\Sigma^\Loop := \langle \{ Cyl(\pi) \mid \pi \in (\R^m)^+ \} \rangle_\sigma$,
		
		\item the \emph{filtration} $(\F^\Loop_i)_{i \in \N}$ contains the smallest $\sigma$-algebras containing all cylinder sets for all prefixes of length $i{+}1$, i.e. $\F^\Loop_i := \langle \{ Cyl(\pi) \mid \pi \in (\R^m)^+,\ |\pi| = i{+}1 \} \rangle_\sigma$.
		
		\item the \emph{probability measure} $\P^\Loop$ is defined as $\P^\Loop(Cyl(\pi)) := p(\pi)$, where $p$ is given by
		\begin{equation*}
			p(s) := \mu_\Init(s),\quad
			p(\pi s s') :=
			\begin{cases}
			p(\pi s) \cdot [s' = s], \text{ if } s \models \neg \Guard_{\Loop} \\
			p(\pi s) \cdot \mu_\Update(s, s'), \text{ if } s \models \Guard_{\Loop}.\\
			\end{cases}
		\end{equation*}
		$\mu_\Init(s)$ denotes the probability that, after initialization $\Init_\Loop$, the loop $\Loop$ is in state $s$.
		$\mu_\Update(s, s')$ denotes the probability that, after one loop iteration starting in state $s$, the resulting program state is $s'$.
		$[\ldots]$ represent the Iverson brackets, i.e. $[s'=s]$ is $1$ iff $s'=s$.
	\end{itemize}
\end{definition}
Intuitively, $\P(Cyl(\pi))$ is the probability that prefix $\pi$ is the sequence of the first $|\pi|$ program states when executing $\Loop$.
Moreover, the $\sigma$-algebra $\F_i$ intuitively captures the information about the program run after the loop body $\Update$ has been executed $i$ times.
We note that the effect of the loop body $\Update$ is considered as atomic.

In order to formalize termination properties of a Prob-solvable loop $\Loop$, we define the \emph{looping time} of $\Loop$ to be a random variable in $\Loop$'s loop space.
\begin{definition}[Looping Time of $\Loop$]
	The \emph{looping time} of $\Loop$ is the random variable $T^{\neg \Guard} : \Omega \to \N \union \{ \infty \}$, where $T^{\neg \Guard}(\vartheta) := \inf \{ i \in \N \mid \vartheta_i \models \neg \Guard \}$.
\end{definition}
Intuitively, the looping time $T^{\neg \Guard}$ maps a program run of $\Loop$ to the index of the first state falsifying the loop guard $\Guard$ of $\Loop$ or to $\infty$ if no such state exists.
We now formalize termination properties of $\Loop$ using the looping time $T^{\neg \Guard}$.

\begin{definition}[Termination of $\Loop$]
	The Prob-solvable loop $\Loop$ is \emph{\gls{ast}} if $\P(T^{\neg \Guard} < \infty) = 1$.
	$\Loop$ is \gls{past} if $\E(T^{\neg \Guard}) < \infty$.
\end{definition}

\subsection{Martingales}

While for arbitrary probabilistic programs, answering $\P(T^{\neg \Guard} < \infty)$ and $\E(T^{\neg \Guard} < \infty)$ is undecidable, sufficient conditions for \gls{ast}, \gls{past} and their
negations have been developed~\cite{chakarov_probabilistic_2013,ferrer_fioriti_probabilistic_2015,mciver_new_2017,chatterjee_stochastic_2017}.
These works use (super)martingales which are special stochastic processes.
In this section, we adopt the general setting of martingale theory to a Prob-solvable loop $\Loop$ and then formalize sufficient termination conditions for $\Loop$ in Section~\ref{section:proof-rules}.

\begin{definition}[Stochastic Process of $\Loop$]
	Every arithmetic expression $E$ over the program variables of $\Loop$ induces the stochastic process $(E_i)_{i \in \N}$, $E_i : \Omega \to \R$ with $E_i(\vartheta) := E(\vartheta_i)$.
	For a run $\vartheta$ of $\Loop$, $E_i(\vartheta)$ is the evaluation of $E$ in the $i$-th state of $\vartheta$.
\end{definition}
In the sequel, for a boolean condition $B$ over program variables $x$ of $\Loop$, we write $B_i$ to refer to the result of substituting $x$ by $x_i$ in $B$.

\begin{definition}[Martingales\label{def:martingale}]
	Let $(\Omega, \Sigma, (\F_i)_{i \in \N}, \P)$ be a filtered probability space and $(M_i)_{i \in \N}$ be an integrable stochastic process adapted to $(\F_i)_{i \in \N}$.
	Then $(M_i)_{i \in \N}$ is a \emph{martingale} if $\E(M_{i+1} \mid \F_i) = M_i$ (or equivalently $\E(M_{i+1}{-}M_i \mid \F_i) = 0$).
	Moreover, $(M_i)_{i \in \N}$ is called a \emph{\gls{sm}} if $\E(M_{i+1} \mid \F_i) \leq M_i$ (or equivalently $\E(M_{i+1}{-}M_i \mid \F_i) \leq 0$).
	For an arithmetic expression $E$ over the program variables of $\Loop$, the conditional expected value $\E(E_{i+1} - E_i \mid \F_i)$ is called \emph{the martingale expression of $E$}.
\end{definition}

	\else
		\section{Preliminaries}
\label{section:preliminaries}

\newacronym{mdp}{MDP}{Markov Decision Process}
\newacronym{mc}{MC}{Markov chain}
\newacronym{sm}{SM}{supermartingale}

We denote by $\N$ and $\R$ the set of natural and real numbers, respectively. Further, let $\overline{\R}$ denote $\R \union \{ + \infty, - \infty \}$, $\R_0^+$ the non-negative reals and $\R[x_1, \ldots, x_m]$ the polynomial ring in $x_1, \ldots, x_m$ over $\R$.
We write $x := E_{(1)}\ [p_1]\ E_{(2)}\ [p_2] \ldots [p_{m-1}]\ E_{(m)}$ for the probabilistic update of program variable $x$, denoting the execution of $x := E_{(j)}$ with probability $p_{j}$, for $j = 1, \ldots, m-1$, and the execution of $x := E_{(m)}$ with probability $1 - \sum_{j=1}^{m-1} p_{j}$, where $m \in \N$.
We write indices of expressions over program variables in round brackets and use $E_i$ for the stochastic process induced by expression $E$.
This section introduces our programming language extending \emph{Prob-solvable loops}~\cite{bartocci_automatic_2019} and defines the probability space introduced by such programs.
We assume the reader to be familiar with probability theory~\cite{kemeny_denumerable_1976}.

\subsection{Programming Model: Prob-Solvable Loops}

Prob-solvable loops~\cite{bartocci_automatic_2019} are syntactically restricted probabilistic programs with polynomial expressions over program variables.
The statistical higher-order moments of program variables, like expectation and variance of such loops, can always be computed as functions of the loop counter.
In this paper, we extend {Prob-solvable loops} with polynomial loop guards in order to study their termination behavior, as follows.
\begin{definition}[Prob-solvable loop $\Loop$]\label{def:ProbSolvable}
	A \emph{Prob-solvable loop} $\Loop$ with real-valued variables $x_{(1)}, ..., x_{(m)}$, where $m \in \N$, is a program of the form: $\Init_\Loop\ while\ \Guard_\Loop\ do\ \Update_\Loop\ end$, with
	\begin{itemize}
		\item (Init) $\Init_\Loop$ is a sequence $x_{(1)} := r_{(1)}, ..., x_{(m)} := r_{(m)}$ of $m$ assignments, with $r_{(j)} \in \R$
		
		\item (Guard) $\Guard_\Loop$ is a strict inequality $P > Q$, where $P,\ Q \in \R[x_{(1)}, \ldots, x_{(m)}]$
		
		\item (Update) $\Update_\Loop$ is a sequence of $m$ probabilistic updates of the form
		$$x_{(j)} := a_{(j 1)} x_{(j)} + P_{(j 1)}\ [p_{j 1}]\ a_{(j 2)} x_{(j)} + P_{(j 2)}\ [p_{j 2}]\ ...\ [p_{j (l_j - 1)}]\ a_{(j l_j)} x_{(j)} + P_{(j l_j)},$$
		
		where $a_{(j k)}\in\R_0^+$ are constants, $P_{(j k)}\in\R[x_{(1)},...,x_{(j-1)}]$ are polynomials, $p_{(jk)}\in[0,1]$ and $\sum_{k}\ p_{jk}<1$.
	\end{itemize}
\end{definition}
If $\Loop$ is clear from the context, the subscript $\Loop$ is omitted from $\Init_\Loop$, $\Guard_\Loop$, and $\Update_\Loop$.
Figure~\ref{fig:motivating} gives four example Prob-solvable loops.

\begin{remark}[Prob-solvable expressiveness]\label{rem:ProbS}
	The enforced order of assignments in the loop body of Prob-solvable loops seems restrictive.
	Notwithstanding these syntactic restrictions, many non-trivial probabilistic programs can be
	naturally modeled as succinct Prob-solvable loops.
	These include complex stochastic processes such as 2D random walks and dynamic Bayesian
	networks~\cite{bartocci2020analysis}.
	Almost all existing benchmarks on automated probabilistic termination analysis fall within the scope of Prob-solvable loops (cf. Section~\ref{section:implementation}).
\end{remark}

In the sequel, we consider an arbitrary Prob-solvable loop $\Loop$ and provide all definitions relative to $\Loop$.
The semantics of $\Loop$ is defined next, by associating $\Loop$ with a probability space.

\subsection{Canonical Probability Space}\label{sec:probspace}

A probabilistic program, and thus a Prob-solvable loop, can be semantically described as a probabilistic transition system~\cite{chakarov_probabilistic_2013} or as a probabilistic control flow graph~\cite{chatterjee_stochastic_2017}, which in turn induce an infinite \gls{mc}~\footnote{In fact, \cite{chatterjee_stochastic_2017} consider Markov decision processes, but in absence of non-determinism in Prob-solvable loops, Markov chains suffice for our purpose.}.
An \gls{mc} is associated with a \emph{sequence space}~\cite{kemeny_denumerable_1976}, a special probability space. 
In the sequel, we associate $\Loop$ with the sequence space of its corresponding \gls{mc}, similarly as in~\cite{hark_aiming_2019}.
To this end, we first define the notions \emph{state} and \emph{run} for a Prob-solvable loop.

\begin{definition}[State, Run of $\Loop$]
	The \emph{state} of Prob-solvable loop $\Loop$ over $m$ variables, is a vector $s \in \R^m$.
	Let $s[j]$ or $s[x_{(j)}]$ denote the $j$-th component of $s$ representing the value of the variable $x_{(j)}$ in state $s$.
	A \emph{run} $\vartheta$ of $\Loop$ is an infinite sequence of states.
\end{definition}
Note that any infinite sequence of states is a run.
Infeasible runs will however be assigned measure $0$.
We write $s \models B$ to denote that the logical formula $B$ holds in state $s$.
A probability space $(\Omega, \Sigma, \P)$ consists of a measurable space $(\Omega, \Sigma)$ and a probability measure $\P$ for this space.
First, we define a measurable space for $\Loop$ and later equip it with a probability measure.

\begin{definition}[Loop Space of $\Loop$]
	The Prob-solvable loop $\Loop$ induces a canonical measurable space $(\Omega^\Loop, \Sigma^\Loop)$, called \emph{loop space}, where
    \begin{itemize}
    	\item the sample space $\Omega^\Loop := (\R^m)^\omega$ is the set of all program runs,
		\item the $\sigma$-algebra $\Sigma^\Loop$ is the smallest $\sigma$-algebra containing all cylinder sets $Cyl(\pi) := \{ \pi \vartheta \mid \vartheta \in (\R^m)^\omega \}$ for all finite prefixes $\pi \in (\R^m)^+$, that is $\Sigma^L := \langle \{ Cyl(\pi) \mid \pi \in (\R^m)^+ \} \rangle_\sigma$.
    \end{itemize}
\end{definition}
To turn the loop space of $\Loop$ into a proper probability space, we introduce a probability measure.
To this end, we define the probability $p(\pi)$ of a finite non-empty prefix $\pi$ of a program run.
Let $\mu_\Init(s)$ denote the probability that, after initialization $\Init_\Loop$, the loop $\Loop$ is in state $s$.
Because probabilistic constructs are not allowed in $\Init_\Loop$, $\mu_\Init(s)$ is a Dirac-distribution, such that $\mu_\Init(s) = 1$ for the unique state $s$ defined by $\Init_{\Loop}$ and $\mu_\Init(s') = 0$ for $s' \neq s$.
Moreover, $\mu_\Update(s, s')$ denotes the probability that, after one loop iteration starting in state $s$, the resulting program state is $s'$. 
Note that $\mu_\Init(s)$ and $\mu_\Update(s, s')$ are solely determined by $\Init_\Loop$ and $\Update_\Loop$.
The probability $p(\pi)$ of a finite non-empty prefix $\pi$ of a program run is then defined as
\begin{flalign*}
	p(s) := \mu_\Init(s),\quad
	p(\pi s s') :=
	\begin{cases}
		p(\pi s) \cdot [s' = s], \text{ if } s \models \neg \Guard_{\Loop} \\
		p(\pi s) \cdot \mu_\Update(s, s'), \text{ if } s \models \Guard_{\Loop}\\
	\end{cases}
\end{flalign*}
where $[\ldots]$ denote the Iverson brackets, i.e. $[s'=s]$ is $1$ iff $s'=s$.
Intuitively, $p(\pi)$ is the probability that prefix $\pi$ is the sequence of the first $|\pi|$ program states when executing $\Loop$.
We note that the effect of the loop body $\Update$ is considered as atomic.

\begin{definition}[Loop Measure of $\Loop$]
  The \emph{loop measure} of a Prob-solvable loop $\Loop$ is a canonical probability measure $\P^\Loop : \Sigma^\Loop \to [0,1]$ on the loop space of $\Loop$, with $\P^\Loop(Cyl(\pi)) := p(\pi)$.
\end{definition}
The loop space and the loop measure of $\Loop$ form the probability space $(\Omega^\Loop, \Sigma^\Loop, \P^\Loop)$.

\subsection{Probabilistic Termination}

In order to formalize termination properties of a Prob-solvable loop $\Loop$, we define the \emph{looping time} of $\Loop$ to be a random variable in $\Loop$'s loop space.
A random variable $X$ in a probability space $(\Omega, \Sigma, \P)$ is a ($\Sigma$-)measurable function $X : \Omega \to \overline{\R}$, i.e.\ for every open interval $U \subseteq \overline{\R}$ it holds that $X^{-1}(U) \in \Sigma$.
The expected value of a random variable $X$, denoted by $\E(X)$, is defined as the Lebesgue integral of $X$ over the probability space, i.e. $\E(X) := \int_\Omega X d\P$.
In the special case that $X$ takes only countably many values, we have ${\E(X) = \int_\Omega X d\P = \sum_{r \in X(\Omega)} \P(X = r) \cdot r}$.
We now define the \emph{looping time} of a Prob-solvable loop $\Loop$, as follows. 

\begin{definition}[Looping Time of $\Loop$]
	The \emph{looping time} of $\Loop$ is the random variable $T^{\neg \Guard} : \Omega \to \N \union \{ \infty \}$, where $T^{\neg \Guard}(\vartheta) := \inf \{ i \in \N \mid \vartheta_i \models \neg \Guard \}$.
\end{definition}
Intuitively, the looping time $T^{\neg \Guard}$ maps a program run of $\Loop$ to the index of the first state falsifying the loop guard $\Guard$ of $\Loop$ or to $\infty$ if no such state exists.
We now formalize termination properties of $\Loop$ using the looping time $T^{\neg \Guard}$. 

\begin{definition}[Termination of $\Loop$]
	The Prob-solvable loop $\Loop$ is \emph{\gls{ast}} if $\P(T^{\neg \Guard} < \infty) = 1$.
	$\Loop$ is \gls{past} if $\E(T^{\neg \Guard}) < \infty$.
\end{definition}

\subsection{Filtrations and Martingales}

For a thorough analysis of the hardness of deciding \gls{ast} and \gls{past} we refer to~\cite{kaminski_hardness_2015}.
While for arbitrary probabilistic programs, answering $\P(T^{\neg \Guard} < \infty)$ and $\E(T^{\neg \Guard} < \infty)$ is undecidable, sufficient conditions for \gls{ast}, \gls{past} and their
negations have been developed~\cite{chakarov_probabilistic_2013,ferrer_fioriti_probabilistic_2015,mciver_new_2017,chatterjee_stochastic_2017}.
These works use (super)martingales which are special stochastic processes.
In this section, we adopt the general setting of martingale theory to a Prob-solvable loop $\Loop$ and then formalize sufficient termination conditions for $\Loop$ in Section~\ref{section:proof-rules}.

\begin{definition}[Stochastic Process of $\Loop$]
	A \emph{stochastic process} $(X_i)_{i \in \N}$ is a sequence of random variables.
	Every arithmetic expression $E$ over the program variables of $\Loop$ induces the stochastic process $(E_i)_{i \in \N}$, $E_i : \Omega \to \R$ with $E_i(\vartheta) := E(\vartheta_i)$.
	For a run $\vartheta$ of $\Loop$, $E_i(\vartheta)$ is the evaluation of $E$ in the $i$-th state of $\vartheta$.
\end{definition}
In the sequel, for a boolean condition $B$ over program variables $x$ of $\Loop$, we write $B_i$ to refer to the result of substituting $x$ by $x_i$ in $B$.
In Figure~\ref{1D-random-walk}, the stochastic process $(x_i)_{i \in \N}$ is such that every $x_i$ maps a given program run $\vartheta$ to the value of the variable $x$ in the $i$-th state of $\vartheta$.
Note that the $\sigma$-algebra $\Sigma^\Loop$ contains the cylinder sets for finite program run prefixes of \emph{arbitrary} length.
This does not capture the gradual information gain when executing $\Loop$ iteration by iteration.
In probability theory, \emph{filtrations} are a standard notion to formalize the information available at a specific point in time.

\begin{definition}[Filtration~\cite{kemeny_denumerable_1976}\label{def:fil}]
	For a probability space $(\Omega, \Sigma, \P)$, a \emph{filtration} is a sequence $(\F_i)_{i \in \N}$ such that (1) every $\F_i$ is a sub-$\sigma$-algebra and (2) $\F_i \subseteq \F_{i+1}$.
	Further, $(\Omega, \Sigma, (\F_i)_{i \in \N}, \P)$ is called a \emph{filtered probability space}.
\end{definition}
We adopt filtrations to Prob-solvable loops and enrich the loop space of $\Loop$ to a filtered probability space, as follows.

\begin{definition}[Loop Filtration of $\Loop$\label{def:loopfilt}]
	The \emph{loop filtration} $(\F^\Loop_i)_{i \in \N}$ of $\Sigma^\Loop$ is defined by  $\F^\Loop_i = \langle \{ Cyl(\pi) \mid \pi \in (\R^m)^+,\ |\pi| = i{+}1 \} \rangle_\sigma$. 
	$(\Omega^\Loop, \Sigma^\Loop, (\F^\Loop)_{i \in \N}, \P^\Loop)$ is a \emph{filtered probability space} of $\Loop$.
\end{definition}
Based on Definition~\ref{def:loopfilt}, note that $\F^\Loop_0$ is the smallest $\sigma$-algebra containing the cylinder sets of finite prefixes of program runs of length $1$.
That is, the cylinder sets of finite prefixes of program runs of length greater than or equal to $2$ are not present in $\F^\Loop_0$.
Hence, $\F^\Loop_0$ captures exactly the information available about the program run after executing just the initialization $\Init_\Loop$.
Similarly, $\F^\Loop_i$ captures the information about the program run after the loop body $\Update_\Loop$ has been executed $i$ times.
In Figure~\ref{1D-random-walk}, for example, the event $\{ \vartheta \in \Omega \mid x_i(\vartheta) = r \}$ denoted by $\{ x_i = r \}$ is $\F^\Loop_i$-measurable for every $i \in \N$ and every $r \in \R$, as the value of $x_i$ depends only on information available up to the $i$-th iteration of the loop body of Figure~\ref{1D-random-walk}.
The following definition formalizes this observation.

\begin{definition}[Adapted Process~\cite{kemeny_denumerable_1976}]
	A stochastic process $(X_i)_{i \in \N}$ is said to be \emph{adapted} to a filtration $(\F_i)_{i \in \N}$ if $X_i$ is $\F_i$-measurable for every $i \in \N$.
\end{definition}
It is not hard to argue that, for any arithmetic expression $E$ over the variables of $\Loop$, the induced stochastic process $(E_i)_{i \in \N}$ is adapted to the loop filtration $\F^\Loop_i$ of $\Loop$:
the value of $E_i$ only depends on the information available up to the $i$-th loop iteration of $\Loop$.

The concept of (super)martingales builds upon the notion of \emph{conditional expected values} which is defined as follows.

\begin{definition}[Conditional Expected Value~\cite{kemeny_denumerable_1976}\label{def:CondExpVal}]
	For a probability space $(\Omega, \Sigma, \P)$, an integrable random variable $X$ and a sub-$\sigma$-algebra $\Delta \subseteq \Sigma$, the \emph{expected value of $X$ conditioned on $\Delta$}, $\E(X \mid \Delta)$, is any $\Delta$-measurable function such that for every $D \in \Delta$ we have ${\int_D \E(X \mid \Delta) d\P = \int_D X d\P}$.
	The random variable $\E(X \mid \Delta)$ is almost surely unique.
\end{definition}
We now introduce (super)martingales as special stochastic processes.
In Section~\ref{section:proof-rules} these notions are used to define sufficient conditions for \gls{past}, \gls{ast} and their negations.

\begin{definition}[Martingales\label{def:martingale}]
	Let $(\Omega, \Sigma, (\F_i)_{i \in \N}, \P)$ be a filtered probability space and $(M_i)_{i \in \N}$ be an integrable stochastic process adapted to $(\F_i)_{i \in \N}$.
	Then $(M_i)_{i \in \N}$ is a \emph{martingale} if $\E(M_{i+1} \mid \F_i) = M_i$ (or equivalently $\E(M_{i+1}{-}M_i \mid \F_i) = 0$).
	Moreover, $(M_i)_{i \in \N}$ is called a \emph{\gls{sm}} if $\E(M_{i+1} \mid \F_i) \leq M_i$ (or equivalently $\E(M_{i+1}{-}M_i \mid \F_i) \leq 0$).
	For an arithmetic expression $E$ over the program variables of $\Loop$, the conditional expected value $\E(E_{i+1} - E_i \mid \F_i)$ is called \emph{the martingale expression of $E$}.
\end{definition}
	\fi
	
	\newacronym{rsm}{RSM}{ranking supermartingale}
\newacronym{rsm-rule}{RSM-Rule}{Ranking-Supermartingale-Rule}
\newacronym{sm-rule}{SM-Rule}{Supermartingale-Rule}
\newacronym{r-ast-rule}{R-AST-Rule}{Repulsing-AST-Rule}
\newacronym{r-past-rule}{R-PAST-Rule}{Repulsing-PAST-Rule}

\section{Proof Rules for Probabilistic Termination}
\label{section:proof-rules}

While \gls{ast} and \gls{past} are undecidable in general~\cite{kaminski_hardness_2015},
sufficient conditions, called \emph{proof rules}, for \gls{ast} and \gls{past} have been introduced, see e.g.~\cite{chakarov_probabilistic_2013,ferrer_fioriti_probabilistic_2015,mciver_new_2017,chatterjee_stochastic_2017}.
In this section, we survey four proof rules, adapted to Prob-solvable loops.
In the sequel, a \emph{pure invariant} is a loop invariant in the classical deterministic sense~\cite{Hoare69}.
Based on the probability space corresponding to $\Loop$, a pure invariant holds before and after every iteration of $\Loop$.

\subsection{Positive Almost Sure Termination (PAST)}
\label{section:proof-rules-past}

The proof rule for \gls{past} introduced in \cite{chakarov_probabilistic_2013} relies on the notion of \glspl{rsm}, which is a \gls{sm} that decreases by a fixed positive $\epsilon$ on average at every loop iteration.
Intuitively, \glspl{rsm} resemble ranking functions for deterministic programs, yet for probabilistic programs.

\begin{theorem}[\gls{rsm-rule} \cite{chakarov_probabilistic_2013}, \cite{ferrer_fioriti_probabilistic_2015}]
\label{th:rsm-rule}
	Let $M : \R^m \to \R$ be an expression over the program variables of $\Loop$ and $I$ a pure invariant of $\Loop$.
	Assume the following conditions hold for all $i \in \N$:
	\begin{enumerate}
		\item\label{rsm-rule:term} (Termination)
		$\Guard \land I \implies M > 0$
		\item\label{rsm-rule:rsm} (\gls{rsm} Condition)
		$\Guard_i \land I_i \implies \E(M_{i+1} - M_i \mid \F_i) \leq - \epsilon$, for some $\epsilon > 0$.
	\end{enumerate}
	Then, $\Loop$ is \gls{past}.
	Further, $M$ is called an \emph{$\epsilon$-ranking supermartingale}.
\end{theorem}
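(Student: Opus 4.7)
The plan is to prove PAST by a stopped-martingale argument, following the approach of Chakarov--Sankaranarayanan and Fioriti--Hermanns. Writing $T = T^{\neg \Guard}$ for the looping time, the goal reduces to showing $\E(T) < \infty$. The key idea is that condition~(\ref{rsm-rule:rsm}) forces the random variable $M_{T \wedge i} + \epsilon\,(T \wedge i)$ to be a supermartingale, which together with the positivity provided by condition~(\ref{rsm-rule:term}) yields a bound on $\E(T \wedge i)$ uniform in $i$.

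First, I would define the stopped and augmented process $Z_i := M_{T \wedge i} + \epsilon\,(T \wedge i)$. Since $\{T \leq i\} \in \F^\Loop_i$ (the event depends only on the first $i{+}1$ states), I would split on $\{T > i\}$ vs.\ $\{T \leq i\}$ to compute $\E(Z_{i+1} - Z_i \mid \F_i)$. On $\{T \leq i\}$ the process is frozen, so the conditional increment is $0$. On $\{T > i\}$, we have $\Guard_i \land I_i$ (using that $I$ is a pure invariant, so $I$ holds throughout any feasible run until termination), whence condition~(\ref{rsm-rule:rsm}) gives $\E(M_{i+1} - M_i \mid \F_i) \leq -\epsilon$, and adding the deterministic increment $\epsilon$ from $T \wedge (i+1) - T \wedge i = 1$ yields $\E(Z_{i+1} - Z_i \mid \F_i) \leq 0$. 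Thus $(Z_i)$ is a supermartingale and $\E(Z_i) \leq \E(Z_0) = \E(M_0)$, which is finite since $\Init_\Loop$ makes $M_0$ a constant.

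Second, I would bound $Z_i$ from below. On $\{T > i\}$ condition~(\ref{rsm-rule:term}) gives $M_i > 0$, so $Z_i \geq \epsilon\,(T \wedge i)$ there. To extend this to $\{T \leq i\}$, where $M_T$ is not directly controlled by the hypotheses, I would work with the truncated process $Z_i' := M_{T \wedge i}^+ + \epsilon\,(T \wedge i)$; replacing $M$ by $M^+$ only strengthens the supermartingale inequality on $\{T > i\}$ (since $M_i = M_i^+$ when the guard holds and $M_{i+1}^+ \leq (M_{i+1})$ fails in one direction only) and makes $Z_i' \geq \epsilon\,(T \wedge i)$ hold globally. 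Taking expectations and monotone convergence as $i \to \infty$ then gives $\epsilon \cdot \E(T) \leq \E(M_0) < \infty$, so $\E(T) < \infty$ and $\Loop$ is PAST.

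The main obstacle is the boundary behavior at the stopping time $T$: condition~(\ref{rsm-rule:term}) only asserts $M > 0$ while $\Guard \land I$ holds, and gives no a priori lower bound on $M_T$, so a naive optional-stopping calculation does not immediately give a non-negative stopped process. The truncation to $M^+$ circumvents this cleanly, at the cost of a short verification that the supermartingale property is preserved. A secondary technicality is checking integrability of each $M_i$, which follows from Definition~\ref{def:ProbSolvable}: the initial state is deterministic and each loop step applies a polynomial update with constant coefficients and finite-support random choices, so $M_i$ is a polynomial in finitely many bounded-moment random variables and is integrable by induction on~$i$.
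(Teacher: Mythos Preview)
The paper does not prove Theorem~\ref{th:rsm-rule}; it is quoted from \cite{chakarov_probabilistic_2013,ferrer_fioriti_probabilistic_2015} as a known result and used as a black box. Your overall strategy---stop the process at $T$, show $M_{T\wedge i}+\epsilon(T\wedge i)$ is a supermartingale, bound $\E(T\wedge i)$ uniformly, and pass to the limit by monotone convergence---is indeed the standard argument from those references, and the first paragraph of your plan is correct.

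The gap is in your boundary treatment. You replace $M$ by $M^+$ to force $Z_i'\ge\epsilon(T\wedge i)$ globally and assert that this ``only strengthens the supermartingale inequality''. It does not: $x\mapsto x^+$ is convex, so Jensen gives $\E(M_{i+1}^+\mid\F_i)\ge\bigl(\E(M_{i+1}\mid\F_i)\bigr)^+$, which is the wrong direction. Concretely, on $\{T>i\}$ with $M_i=1$ and $M_{i+1}\in\{-100,50\}$ each with probability $\tfrac12$, one has $\E(M_{i+1}-M_i\mid\F_i)=-26$ yet $\E(M_{i+1}^+-M_i^+\mid\F_i)=24>0$, so the truncated process fails to be a supermartingale. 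The cited sources close this hole by an additional hypothesis that the paper's restatement suppresses: the ranking supermartingale is taken to be globally bounded below (typically $M\ge 0$, or $M\ge -K$ for some constant $K$). With $M\ge -K$ your first paragraph already finishes the proof: from $\E(M_{T\wedge i})\le \E(M_0)-\epsilon\,\E(T\wedge i)$ and $M_{T\wedge i}\ge -K$ you get $\E(T\wedge i)\le(\E(M_0)+K)/\epsilon$, then let $i\to\infty$. Invoke that lower bound (from the original RSM definitions) rather than truncating to $M^+$.
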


\begin{example}
	Consider Figure~\ref{PAST-random}, set $M := 100 - x^2 - y^2$ and $\epsilon := 2$ and let $I$ be $true$.
	Condition~\eqref{rsm-rule:term} of Theorem~\ref{th:rsm-rule} trivially holds.
	Further, $M$ is also an {$\epsilon$-ranking supermartingale}, as $\E(M_{i+1} - M_i \mid \F_i) = 100 - \E(x_{i+1}^2 \mid \F_i) - \E(y_{i+1}^2 \mid \F_i) - 100 + x_i^2 + y_i^2 = -2 - x_i^2 \leq -2$.
	That is because $\E(x_{i+1}^2 \mid \F_i) = x_i^2 + 1$ and $\E(y_{i+1}^2 \mid \F_i) = y_i^2 + x_i^2 + 1$.
	Figure~\ref{PAST-random} is thus proved \gls{past} using the \gls{rsm-rule}.
\end{example}

\subsection{Almost Sure Termination (AST)}\label{section:proof-rules-ast}

Recall that Figure~\ref{1D-random-walk} is \gls{ast} but not \gls{past}, and hence the \gls{rsm}-rule cannot be used for Figure~\ref{1D-random-walk}.
By relaxing the ranking conditions, the proof rule in~\cite{mciver_new_2017} uses general supermartingales to prove \gls{ast} of programs that are not necessarily \gls{past}.

\begin{theorem}[\gls{sm-rule}~\cite{mciver_new_2017}]
	\label{th:sm-rule}
	Let $M : \R^m \to \R_{\geq 0}$ be an expression over the program variables of $\Loop$ and $I$ a pure invariant of $\Loop$.
	Let $p : \R_{\geq 0} \to (0,1]$ (for \emph{probability}) and $d : \R_{\geq 0} \to \R_{> 0}$ (for \emph{decrease}) be antitone (i.e. monotonically decreasing) functions.
	Assume the following conditions hold for all $i \in \N$:
	\begin{enumerate}
		\item (Termination)
		\label{sm-rule:cond:term}
		$\Guard \land I \implies M > 0$
		\item (Decrease)
		\label{sm-rule:cond:dec}
		$\Guard_i  \land I_i \implies \P(M_{i + 1} - M_i \leq -d(M_i) \mid \F_i) \geq p(M_i)$
		\item (\gls{sm} Condition)
		\label{sm-rule:cond:sm}
		$\Guard_i \land I_i \implies \E(M_{i+1} - M_i \mid \F_i) \leq 0$.
	\end{enumerate}
	Then, $\Loop$ is \gls{ast}.
\end{theorem}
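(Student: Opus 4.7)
The plan is to argue by contradiction: assume $\P(T^{\neg \Guard} = \infty) > 0$ and derive a contradiction by combining non-negative supermartingale convergence with the decrease condition. First I would pass to the stopped process $\tilde M_i := M_{i \wedge T^{\neg \Guard}}$. Condition~(3), together with the fact that $\tilde M_i$ freezes after the guard is falsified and that $M$ is $\R_{\geq 0}$-valued, makes $(\tilde M_i)_{i \in \N}$ a non-negative supermartingale adapted to $(\F^\Loop_i)$. Doob's convergence theorem for non-negative supermartingales then yields an almost-sure finite limit $\tilde M_\infty$. On the event $E := \{T^{\neg \Guard} = \infty\}$ the stopped process coincides with $(M_i)$, and since $\Guard$ holds at every step on $E$, condition~(1) forces $M_i > 0$ for all $i$. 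Hence $M_i$ converges almost surely on $E$ to a random variable $M_\infty \geq 0$, and because $\P(E) > 0$, at least one of $\P(\{M_\infty > 0\} \cap E) > 0$ or $\P(\{M_\infty = 0\} \cap E) > 0$ must hold.

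Next I would exploit condition~(2) in each case. If $\P(\{M_\infty > 0\} \cap E) > 0$, pick rationals $0 < c < C$ such that $F := \{c \leq M_\infty \leq C\} \cap E$ has positive probability. On $F$ the sequence $M_i$ is eventually trapped in $[c/2, 2C]$; antitonicity of $p$ and $d$ then gives the uniform lower bounds $p(M_i) \geq p(2C) > 0$ and $d(M_i) \geq d(2C) > 0$. A conditional Borel--Cantelli argument (e.g.\ L\'evy's extension applied to $(\F^\Loop_i)$) forces the events $\{M_{i+1} - M_i \leq -d(2C)\}$ to occur infinitely often almost surely on $F$, contradicting convergence of $M_i$. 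If instead $\P(\{M_\infty = 0\} \cap E) > 0$, fix any $C > 0$. On this set $M_i \to 0$ while $M_i > 0$, so for all sufficiently large $i$ we have both $M_i \leq C$ (hence $d(M_i) \geq d(C) > 0$ by antitonicity) and $M_i < d(C) \leq d(M_i)$. For such $i$ the event $\{M_{i+1} - M_i \leq -d(M_i)\}$ is contained in $\{M_{i+1} < 0\}$, which is empty by non-negativity of $M$; its conditional probability is therefore $0$, contradicting condition~(2) which demands it be at least $p(M_i) \geq p(C) > 0$.

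The main obstacle is the supermartingale-convergence step together with the conditional Borel--Cantelli argument of the first case: one must verify the integrability hypothesis of Doob's theorem for the stopped process and set up the filtration-level Borel--Cantelli carefully so that the antitone uniform bounds transfer to an almost-sure statement about the trajectory on $F$. Once these are in place, both cases together yield $\P(E) = 0$, i.e.\ $\P(T^{\neg \Guard} < \infty) = 1$, which is the desired almost sure termination.
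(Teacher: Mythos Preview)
The paper does not give a proof of this theorem: it is stated with a citation to McIver et al.\ \cite{mciver_new_2017} as an existing result from the literature, and the paper moves directly to an illustrative example. There is therefore no proof in the paper to compare your proposal against.

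For what it is worth, your approach---Doob's convergence for the stopped non-negative supermartingale, followed by a case split on the value of the limit and a conditional Borel--Cantelli argument---is essentially the route taken in the original source. Your sketch is sound; the measurability bookkeeping you flag (that the ``eventually trapped'' time is random, and that the contradiction in the $M_\infty=0$ case must be phrased as an almost-sure identity between $\mathcal F_i$-measurable random variables on the $\mathcal F_i$-measurable event $\{\Guard_i\wedge I_i\}\cap\{M_i<d(M_i)\}$) is exactly the care that needs to be taken, but none of it presents a genuine obstruction.
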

Intuitively, the requirement of $d$ and $p$ being antitone forbids that the ``execution~progress'' of $\Loop$ towards termination becomes infinitely small while still being positive.

\begin{example}
	\label{ex:ast:1D} 
	The \gls{sm-rule} can be used to prove \gls{ast} for Figure~\ref{1D-random-walk}. 
	Consider $M := x$, $p := \nicefrac{1}{2}$, $d := 1$ and $I := true$.
	Clearly, $p$ and $d$ are antitone.
	The remaining conditions of Theorem~\ref{th:sm-rule} also hold as \eqref{sm-rule:cond:term} $x > 0 \implies x > 0$;
	\eqref{sm-rule:cond:dec} $x$ decreases by $d$ with probability $p$ in every iteration;
	and \eqref{sm-rule:cond:sm} $\E(M_{i+1} - M_i \mid \F_i) = x_i - x_i \leq 0$.
\end{example}

\subsection{Non-Termination}\label{section:proof-rules-r-ast}

While Theorems~\ref{th:rsm-rule} and \ref{th:sm-rule} can be used for proving \gls{ast} and \gls{past}, respectively, they are not applicable to the analysis of non-terminating Prob-solvable loops.
Two sufficient conditions for certifying the negations of \gls{ast} and \gls{past} have been introduced in~\cite{chatterjee_stochastic_2017} using so-called \emph{repulsing-supermartingales}.
Intuitively, a \emph{repulsing-supermartingale} $M$ on average decreases in every iteration of $\Loop$ and on termination is non-negative.
Figuratively, $M$ repulses terminating states.

\begin{theorem}[\gls{r-ast-rule} \cite{chatterjee_stochastic_2017}]
	\label{th:repsm-ast-rule}
	Let $M : \R^m \to \R$ be an expression over the program variables of $\Loop$ and $I$ a pure invariant of $\Loop$.
	Assume the following conditions hold for all $i \in \N$:
	\begin{enumerate}
		\item (Negative)
		\label{repsm-ast-rule:cond:neg}
		$M_0 < 0$
		\item (Non-Termination)
		\label{repsm-ast-rule:cond:pos}
		$\neg \Guard \land I  \implies M \geq 0$
		\item (\gls{rsm} Condition) 
		\label{repsm-ast-rule:cond:rsm}
		$\Guard_i  \land I_i \implies \E(M_{i+1} - M_i \mid \F_i) \leq - \epsilon$, for some $\epsilon > 0$
		\item ($c$-Bounded Differences)
		\label{repsm-ast-rule:cond:bounded}
		$|M_{i+1} - M_i| < c$, for some $c > 0$.
	\end{enumerate}
	Then, $\Loop$ is \emph{not} \gls{ast}.
	$M$ is called an \emph{$\epsilon$-repulsing supermartingale with $c$-bounded differences}. 
\end{theorem}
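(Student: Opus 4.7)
The plan is to adapt the classical exponential-supermartingale argument used for bounding the ruin probability of biased random walks to the setting of a repulsing supermartingale on the loop space of $\Loop$. The goal is to produce a non-negative supermartingale $(Z_i)_{i \in \N}$ satisfying $Z_0 < 1$ and $Z_{T^{\neg \Guard}} \geq 1$ on termination; the Optional Stopping Theorem then yields a uniform-in-$n$ upper bound on $\P(T^{\neg \Guard} \leq n)$ that is strictly less than $1$, whence $\Loop$ cannot be \gls{ast}.

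First, I would verify that $(M_i)_{i \in \N}$ itself is a supermartingale on $\Loop$'s loop space. Since $I$ is a pure invariant, $I_i$ holds almost surely for every $i$, so condition~\eqref{repsm-ast-rule:cond:rsm} gives $\E(M_{i+1}-M_i \mid \F_i) \leq -\epsilon$ on $\{T^{\neg \Guard} > i\}$, while the canonical semantics of $\Loop$ freezes the state once the guard fails, so $M_{i+1} = M_i$ on $\{T^{\neg \Guard} \leq i\}$. Combining this with the bounded-differences condition~\eqref{repsm-ast-rule:cond:bounded} and Hoeffding's lemma (applied to the zero-mean, $2c$-range random variable $M_{i+1}-M_i - \E(M_{i+1}-M_i \mid \F_i)$) yields, for every $\theta > 0$,
\[
    \E\bigl(e^{\theta(M_{i+1}-M_i)} \mid \F_i\bigr) \leq \exp\bigl(\theta\,\E(M_{i+1}-M_i \mid \F_i) + \theta^2 c^2/2\bigr).
\]
Choosing $\theta := 2\epsilon/c^2 > 0$ renders the right-hand side at most $1$ on $\{T^{\neg \Guard} > i\}$ and trivially equal to $1$ on $\{T^{\neg \Guard} \leq i\}$. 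Hence $Z_i := e^{\theta M_i}$ is a non-negative $(\F_i)$-supermartingale.

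Next, applying the Optional Stopping Theorem to the bounded stopping time $\min\{T^{\neg \Guard}, n\}$ gives $\E(Z_{\min\{T^{\neg \Guard}, n\}}) \leq Z_0 = e^{\theta M_0}$. On $\{T^{\neg \Guard} \leq n\}$, condition~\eqref{repsm-ast-rule:cond:pos} forces $M_{T^{\neg \Guard}} \geq 0$, so $Z_{\min\{T^{\neg \Guard}, n\}} \geq 1$; on $\{T^{\neg \Guard} > n\}$ one only uses $Z \geq 0$. Hence $\P(T^{\neg \Guard} \leq n) \leq e^{\theta M_0}$ for every $n \in \N$, and by condition~\eqref{repsm-ast-rule:cond:neg} ($M_0 < 0$) this quantity is strictly less than $1$. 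Letting $n \to \infty$ by continuity of $\P$ from below then yields $\P(T^{\neg \Guard} < \infty) \leq e^{\theta M_0} < 1$, so $\Loop$ is not \gls{ast}.

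The main obstacle will be the bookkeeping between the syntactic implications of Theorem~\ref{th:repsm-ast-rule} (gated by $\Guard \land I$ or $\neg \Guard \land I$) and the almost-sure statements needed for Hoeffding's lemma and Optional Stopping. The pureness of $I$ is essential here: it guarantees $I_i$ holds on every run of positive measure, so the implications become almost-sure pointwise inequalities on the events $\{\Guard_i\}$ and $\{\neg \Guard_i\}$, after which the exponential-martingale argument proceeds routinely.
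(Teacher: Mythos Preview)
The paper does not prove Theorem~\ref{th:repsm-ast-rule} at all: it is quoted verbatim as a result of Chatterjee et al.~\cite{chatterjee_stochastic_2017}, followed only by Example~\ref{ex:nonAST} and a remark about the \gls{r-past-rule} variant. There is therefore no ``paper's own proof'' to compare against.

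That said, your argument is correct and is essentially the standard route to this kind of result (and close to how \cite{chatterjee_stochastic_2017} proves it): build the exponential process $Z_i = e^{\theta M_i}$, use Hoeffding's lemma together with the $c$-bounded differences and the $\epsilon$-drift to make $(Z_i)$ a non-negative supermartingale for the specific choice $\theta = 2\epsilon/c^2$, then apply Optional Stopping at $\min\{T^{\neg\Guard},n\}$ and use conditions~\eqref{repsm-ast-rule:cond:neg} and~\eqref{repsm-ast-rule:cond:pos} to bound $\P(T^{\neg\Guard}\le n)$ uniformly below~$1$. Two small points worth tightening: (i) your use of $Z_0 = e^{\theta M_0}$ as a deterministic constant silently relies on $\Init_\Loop$ being non-probabilistic, which is true for Prob-solvable loops but should be stated; (ii) when you invoke Hoeffding's lemma conditionally, you need the range bound $|M_{i+1}-M_i|<c$ to hold almost surely given $\F_i$, which the theorem's phrasing (``for all $i$'') is implicitly asserting but which you might make explicit. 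Neither point is a gap in the mathematics.
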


\begin{example} \label{ex:nonAST}
	Consider Figure~\ref{not-AST-repulsing} and let $M := -x$, $c := 3$, $\epsilon := \nicefrac{1}{2}$ and $I := true$.
	All four above conditions hold: 
	\eqref{repsm-ast-rule:cond:neg} $-x_0 = -10 < 0$;
	\eqref{repsm-ast-rule:cond:pos} $x \leq 0 \implies -x \geq 0$;
	\eqref{repsm-ast-rule:cond:rsm} $\E(M_{i+1} - M_i \mid \F_i) = -x_i - \nicefrac{1}{2} + x_i = - \nicefrac{1}{2} \leq - \epsilon$;
	and \eqref{repsm-ast-rule:cond:bounded} $|x_i-x_{i+1}| < 3$.
	Thus, Figure~\ref{not-AST-repulsing} is not \gls{ast}.
\end{example}
While Theorem~\ref{th:repsm-ast-rule} can prove programs not to be \gls{ast}, and thus also not \gls{past}, it cannot be used to prove programs not to be \gls{past} when they are \gls{ast}.
For example, Theorem~\ref{th:repsm-ast-rule} cannot be used to prove that Figure~\ref{1D-random-walk} is not \gls{past}.
To address such cases, a variation of the \gls{r-ast-rule}~\cite{chatterjee_stochastic_2017} for certifying programs not to be \gls{past} arises by relaxing the condition $\epsilon > 0$ of the \gls{r-ast-rule} to $\epsilon \geq 0$.
We refer to this variation by \emph{\gls{r-past-rule}}.

\ifshort\else
\begin{example}
	Consider Figure~\ref{1D-random-walk}. We set $M := -x$, $c:=1$ and $\epsilon :=0$.
	Note that $\E(M_{i+1} - M_i \mid \F_i) = -x_i + x_i \leq 0$ and it is easy to see that all four conditions of Theorem~\ref{th:repsm-ast-rule} hold (with $\epsilon \geq 0$).
	Thus, the \gls{r-past-rule} proves that Figure~\ref{1D-random-walk} is not \gls{past}.
\end{example}
\fi

	\section{Relaxed Proof Rules for Probabilistic Termination}
\label{section:relaxations}

While Theorems~\ref{th:rsm-rule}-\ref{th:repsm-ast-rule} provide sufficient conditions proving \gls{past}, \gls{ast} and their negations, the applicability to Prob-solvable loops is somewhat restricted.
For example, the \gls{rsm-rule} cannot be used to prove Figure~\ref{PAST-rsm-limit} to be \gls{past} using the simple expression $M := x$, as explained in detail with Example~\ref{ex:limit-rsm-rule}, but may require more complex witnesses for certifying \gls{past}, complicating automation.
In this section, we relax the conditions of Theorems~\ref{th:rsm-rule}-\ref{th:repsm-ast-rule} by requiring these conditions to only hold ``eventually''.
A property $P(i)$ parameterized by a natural number $i \in \N$ \emph{holds eventually} if there is an $i_0 \in \N$ such that $P(i)$ holds for all $i \geq i_0$.
Our relaxations of probabilistic termination proof rules can intuitively be described as follows:
If $\Loop$, after a fixed number of steps, almost surely reaches a state from which the program is \gls{past} or \gls{ast}, then the program is \gls{past} or \gls{ast}, respectively.
Let us first illustrate the benefits of reasoning with ``eventually'' holding properties for probabilistic termination in the following example.

\begin{figure}
	\centering
	\begin{subfigure}{0.4\linewidth}
		\centering
		\begin{minipage}{\linewidth}
			\begin{probloop}
				x := $x_0$,
				y := 0
				
				\CWhile{$x > 0$}{
					\pvar{y} := $\pvar{y} + 1$
					
					\pvar{x} := $\pvar{x} + (y - 5)$ \prob{\nicefrac{1}{2}} $\pvar{x} - (y - 5)$
				}
			\end{probloop}
		\end{minipage}
		\caption{}
		\label{fig:limit-sm-rule}
	\end{subfigure}
	\begin{subfigure}{0.3\linewidth}
		\centering
		\begin{minipage}{\linewidth}
			\begin{probloop}
				x := $1$, 
				y := 2
				
				\CWhile{$x > 0$} {
					\pvar{y} := $\nicefrac{1}{2} \cdot \pvar{y}$
					
					\pvar{x} := $\pvar{x} + 1 - y$ \prob{\nicefrac{2}{3}} $\pvar{x} - 1 + y$
				}
			\end{probloop}
		\end{minipage}
		\caption{}
		\label{fig:limit-r-ast-rule}
	\end{subfigure}

	\caption{
		Prob-solvable loops which require our relaxed proof rules for termination analysis.
	}
	\label{fig:motivating:relax}
\end{figure}

\begin{example}[Limits of the \gls{rsm-rule} and \gls{sm-rule}]
	\label{ex:limit-rsm-rule}
Consider Figure~\ref{PAST-rsm-limit}. Setting $M := x$, we have the martingale expression $\E(M_{i+1} - M_i \mid \F_i) = -\nicefrac{y_i^2}{2} + y_i + \nicefrac{3}{2} = -\nicefrac{i^2}{2} + i + \nicefrac{3}{2}$.
Since $\E(x_{i+1} - x_i \mid \F_i)$ is non-negative for $i \in \{ 0,1,2,3 \}$, we conclude that $M$ is not an \gls{rsm}. 
However, Figure~\ref{PAST-rsm-limit} either terminates within the first three iterations or, after three loop iterations, is in a state such that the \gls{rsm-rule} is applicable.
Therefore, Figure~\ref{PAST-rsm-limit} is \gls{past} but the \gls{rsm-rule} cannot directly prove using $M := x$.
A similar restriction of the \gls{sm-rule} can be observed for Figure~\ref{fig:limit-sm-rule}. By considering $M := x$, we derive the martingale expression $\E(x_{i+1} - x_i \mid \F_i) = 0$, implying that $M$ is a martingale for Figure~\ref{fig:limit-sm-rule}. 
However, the decrease function $d$ for the \gls{sm-rule} cannot be
defined because, for example, in the fifth loop iteration of
Figure~\ref{fig:limit-sm-rule}, there is no progress as $x$ is
almost surely updated with its previous value. 
However, after the fifth iteration of Figure~\ref{fig:limit-sm-rule}, $x$ always decreases by at least $1$ with probability $\nicefrac{1}{2}$ and all conditions of the \gls{sm-rule} are satisfied.
Thus, Figure~\ref{fig:limit-sm-rule} either terminates within the first five iterations or reaches a state from which it terminates almost surely.
Consequently, Figure~\ref{fig:limit-sm-rule} is \gls{ast} but the \gls{sm-rule} cannot directly prove it using $M := x$.
\end{example}

We therefore relax the \gls{rsm-rule} and \gls{sm-rule} of Theorems~\ref{th:rsm-rule} and \ref{th:sm-rule} as follows.

\begin{theorem}[Relaxed Termination Proof Rules]\label{relax-term}
	For the \gls{rsm-rule} to certify \gls{past} of $\Loop$, it is sufficient that conditions~\eqref{rsm-rule:term}-\eqref{rsm-rule:rsm} of Theorem \ref{th:rsm-rule} hold eventually (instead of for all $i \in \N$).
	Similarly, for the \gls{sm-rule} to certify \gls{ast} of $\Loop$, it is sufficient that conditions~\eqref{sm-rule:cond:term}-\eqref{sm-rule:cond:sm} of Theorem \ref{th:sm-rule} hold eventually.
\end{theorem}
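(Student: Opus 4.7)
The plan is to reduce each relaxed rule to its non-relaxed counterpart by shifting the filtration by $i_0$ steps, where $i_0 \in \N$ is the threshold from which all eventually-holding conditions are satisfied (such a common $i_0$ exists by taking the maximum of the individual thresholds). The core observation is that any execution of $\Loop$ either terminates within the first $i_0$ iterations---contributing at most $i_0$ to $T^{\neg \Guard}$ and $0$ to $\P(T^{\neg \Guard} = \infty)$---or it is still looping at iteration $i_0$, from which point on the standard, non-relaxed proof rule applies to the shifted process $(M_{i_0 + k})_{k \in \N}$ together with the shifted filtration $(\F_{i_0 + k})_{k \in \N}$.

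For the relaxed \gls{rsm-rule}, I would first argue that on the event $A := \{T^{\neg \Guard} > i_0\} \in \F_{i_0}$, the shifted process satisfies all conditions of Theorem~\ref{th:rsm-rule} for the same $\epsilon$ and invariant $I$. Applying the standard RSM-Rule argument (a Doob / optional-stopping bound, as in \cite{chakarov_probabilistic_2013}) conditionally on $\F_{i_0}$ yields $\E(T^{\neg \Guard} - i_0 \mid \F_{i_0}) \leq M_{i_0}/\epsilon$ almost surely on $A$. Taking total expectation gives $\E(T^{\neg \Guard}) \leq i_0 + \E(|M_{i_0}|)/\epsilon$, which is finite because $M$ is a polynomial and the program variables of a Prob-solvable loop admit finite moments at every fixed iteration (cf.~\cite{bartocci_automatic_2019}). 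Hence $\Loop$ is \gls{past}.

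For the relaxed \gls{sm-rule}, an entirely analogous shift argument applies. On $A$, the shifted process satisfies all conditions of Theorem~\ref{th:sm-rule} (the antitone functions $p$ and $d$ are unchanged), so the non-relaxed SM-Rule yields $\P(T^{\neg \Guard} - i_0 < \infty \mid \F_{i_0}) = 1$ almost surely on $A$. Since $A$ and its complement $\{T^{\neg \Guard} \leq i_0\}$ partition $\Omega$ up to a null set, we conclude $\P(T^{\neg \Guard} < \infty) = 1$, i.e., $\Loop$ is \gls{ast}.

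The main obstacle I anticipate is the careful measure-theoretic justification that the shifted filtered probability space conditioned on $\F_{i_0}$ behaves like the canonical loop space of a Prob-solvable loop in the sense required by Theorems~\ref{th:rsm-rule} and \ref{th:sm-rule}---in particular, that measurability and adaptedness are preserved under the shift, and that a regular conditional probability given $\F_{i_0}$ exists and restricts on each fibre to a well-defined loop measure. For the PAST case there is the additional subtlety of verifying integrability of $M_{i_0}$, which relies on the moment-finiteness of Prob-solvable loop variables at any fixed step; this is precisely where the specific structure of Prob-solvable loops (as opposed to arbitrary probabilistic programs) is genuinely used.
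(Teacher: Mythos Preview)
Your proposal is correct and rests on the same core idea as the paper --- split the execution at the threshold $i_0$ and apply the non-relaxed rule from that point on --- but the route you take is different in style. The paper argues by \emph{program transformation}: it constructs an auxiliary program $\mathcal{P}$ that first executes the loop body $i_0$ times unconditionally (a bounded loop with fresh counter $i$) and then runs the original guarded loop; it observes that $\Loop$ is \gls{past} (resp.\ \gls{ast}) whenever $\mathcal{P}$ is, and then applies Theorem~\ref{th:rsm-rule} (resp.\ Theorem~\ref{th:sm-rule}) as a black box to the second loop of $\mathcal{P}$, using $i \geq i_0$ as an additional pure invariant. No filtration shift, no conditional optional-stopping, no integrability check on $M_{i_0}$ is spelled out.

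By contrast, you work semantically inside the loop space: you shift the filtration, condition on $\{T^{\neg\Guard}>i_0\}$, and re-derive the ranking-supermartingale bound explicitly, which gives the quantitative estimate $\E(T^{\neg\Guard}) \leq i_0 + \E(|M_{i_0}|)/\epsilon$. This is slightly more informative but, as you correctly anticipate, costs you the measure-theoretic housekeeping (existence of regular conditional probabilities, adaptedness under shift, integrability of $M_{i_0}$). The paper's syntactic detour sidesteps precisely these obstacles by pushing them into the black-box invocation of the original proof rules; in effect it trades your explicit bound for a shorter argument. Both are valid and essentially equivalent in scope for the class of Prob-solvable loops.
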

\begin{proof}
	We prove the relaxation of the \gls{rsm-rule}.
	The proof of the relaxed \gls{sm-rule} is analogous.
	Let $\Loop:= \Init\ while\ \Guard\ do\ \Update\ end$ be as in Definition~\ref{def:ProbSolvable}.
	Assume $\Loop$ satisfies the conditions~\eqref{rsm-rule:term}-\eqref{rsm-rule:rsm} of Theorem \ref{th:rsm-rule} after some $i_0 \in \N$.
	We construct the following probabilistic program $\mathcal{P}$, where $i$ is a new variable not appearing in $\Loop$:
	\begin{equation}\label{relaxed:PP}
		\begin{array}{l}
			\Init; i:=0\\
			while\ i<i_0\ do\ \Update; i:=i+1\ end\\
			while\ \Guard\ do\ \Update\ end
		\end{array}
	\end{equation}
	We first argue that if $\mathcal{P}$ is \gls{past}, then so is $\Loop$.
	Assume $\mathcal{P}$ to be \gls{past}.
	Then, the looping time of $\Loop$ is either bounded by $i_0$ or it is \gls{past}, by the definition of $\mathcal{P}$.
	In both cases, $\Loop$ is \gls{past}.
	Finally, observe that $\mathcal{P}$ is \gls{past} if and only if its second while-loop is \gls{past}.
	However, the second while-loop of $\mathcal{P}$ can be certified to be \gls{past} using the \gls{rsm-rule} and additionally using $i \geq i_0$ as an invariant.\qed
\end{proof}

\begin{remark}
	The central point of our proof rule relaxations is that they allow for simpler witnesses.	
	While for Example~\ref{ex:limit-rsm-rule} it can be checked that $M := x + 2^{y + 5}$ is an \gls{rsm}, the example illustrates that the relaxed proof rule allows for a much simpler \gls{past} witness (linear instead of exponential).
	This simplicity is key for automation.
\end{remark}

Similar to Theorem~\ref{relax-term}, we relax the \gls{r-ast-rule} and the \gls{r-past-rule}. 
However, compared to Theorem~\ref{relax-term}, it is not enough for a non-termination proof rule to certify non-\gls{ast} from some state onward, because $\Loop$ may never reach this state as it might terminate earlier.
Therefore, a necessary assumption when relaxing non-termination proof rules comes with ensuring that $\Loop$ has a positive probability of reaching the state after which a proof rule witnesses non-termination.
This is illustrated in the following example .

\begin{example}[Limits of the \gls{r-ast-rule}]
	\label{ex:limit-r-ast-rule}
	Consider Figure~\ref{fig:limit-r-ast-rule} and set $M := -x$.
	As a result, we get $\E(M_{i+1} - M_i \mid \F_i) = \nicefrac{y_i}{6} - \nicefrac{1}{3} = \nicefrac{2^{-i}}{3} - \nicefrac{1}{3}$.
	Thus, $\E(M_{i+1} - M_i \mid \F_i) = 0$ for $i = 0$, implying that $M$ cannot be an $\epsilon$-repulsing supermartingale with $\epsilon > 0$ for all $i \in \N$.
	However, after the first iteration of $\Loop$, $M$ satisfies all requirements of the \gls{r-ast-rule}.
	Moreover, $\Loop$ always reaches the second iteration because in the first iteration $x$ almost surely does not change.
	From this follows that Figure~\ref{fig:limit-r-ast-rule} is not \gls{ast}.
\end{example}

The following theorem formalizes the observation of Example~\ref{ex:limit-r-ast-rule} relaxing the \gls{r-ast-rule} and \gls{r-past-rule} of Theorem~\ref{th:repsm-ast-rule}.

\begin{theorem}[Relaxed Non-Termination Proof Rules for]
	\label{relax-nonterm}
	For the \gls{r-ast-rule} to certify non-\gls{ast} for $\Loop$ (Theorem \ref{th:repsm-ast-rule}), as well as for the \gls{r-past-rule} to certify non-\gls{past} for $\Loop$ (Theorem \ref{th:repsm-ast-rule}), if $\P(M_{i_0} < 0) > 0$ for some $i_0 \geq 0$, it suffices that conditions \eqref{repsm-ast-rule:cond:pos}-\eqref{repsm-ast-rule:cond:bounded} hold for all $i \geq i_0$ (instead of for all $i \in \N$).
\end{theorem}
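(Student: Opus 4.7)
The plan is to mirror the auxiliary-program construction from the proof of Theorem~\ref{relax-term} but in reverse: instead of inferring termination of $\Loop$ from termination of a tail subprogram, I will infer non-termination of $\Loop$ from non-termination of an appropriate tail. First I would fix the event $A := \{M_{i_0} < 0\}$ in the loop space of $\Loop$. By hypothesis $\P(A) > 0$, and condition~\eqref{repsm-ast-rule:cond:pos} forces $M \geq 0$ at every state falsifying $\Guard$; hence every run in $A$ still satisfies $\Guard_{i_0}$, i.e.\ cannot have terminated by step $i_0$.

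Next I would exploit the Markov structure built into $\P^\Loop$: conditional on $\vartheta_{i_0} = s$ for any state $s$ with $M(s) < 0$ and $s \models \Guard$, the shifted trajectory $(\vartheta_{i_0+k})_{k \in \N}$ is distributed exactly as a run of the Prob-solvable loop $\Loop^s$ obtained from $\Loop$ by replacing $\Init_\Loop$ with the deterministic initialization of the program variables to $s$. This shift-invariance follows directly from the cylinder-set definition of $\P^\Loop$ together with the atomicity of $\Update_\Loop$.

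For any such $s$, all four hypotheses of Theorem~\ref{th:repsm-ast-rule} now hold for $M$ viewed as a process on $\Loop^s$: condition~\eqref{repsm-ast-rule:cond:neg} holds because $M(s) < 0$, while conditions \eqref{repsm-ast-rule:cond:pos}--\eqref{repsm-ast-rule:cond:bounded} transfer to $\Loop^s$ because the relaxed hypotheses assumed them for all $i \geq i_0$ in $\Loop$. Applying the original \gls{r-ast-rule} (resp.\ \gls{r-past-rule}) to $\Loop^s$ thus certifies it as non-\gls{ast} (resp.\ non-\gls{past}). Integrating this conditional non-termination statement over the distribution of $\vartheta_{i_0}$ restricted to $A$ and using $\P(A) > 0$ yields $\P(T^{\neg \Guard} = \infty) > 0$ (resp.\ $\E(T^{\neg \Guard}) = \infty$) for $\Loop$, which is exactly non-\gls{ast} (resp.\ non-\gls{past}).

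The main obstacle I anticipate is the Markov/shift step of the second paragraph: one must verify carefully that the push-forward of $\P^\Loop$ under the time-$i_0$ shift, conditioned on $\vartheta_{i_0}$, truly coincides with $\P^{\Loop^s}$ on cylinder sets, so that Theorem~\ref{th:repsm-ast-rule} can be invoked verbatim on each realisation of $\vartheta_{i_0}$. Once that distributional identity is in place, the remaining work is a routine integration of the resulting conditional non-termination statement against the law of $\vartheta_{i_0}$ on $A$, and the overall argument essentially mirrors the auxiliary-program construction~\eqref{relaxed:PP} from the proof of Theorem~\ref{relax-term}.
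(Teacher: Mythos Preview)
Your proposal is correct and follows the same underlying strategy as the paper: shift time by $i_0$, observe that the original \gls{r-ast-rule} (resp.\ \gls{r-past-rule}) applies to the tail, and transfer non-termination back to $\Loop$ via the positive-probability event $\{M_{i_0}<0\}$. The paper packages this via the single auxiliary program~\eqref{relaxed:PP} and argues through cylinder sets $Cyl(\pi)\in\F^{\mathcal P}_{i_0}$, whereas you disintegrate over the state $\vartheta_{i_0}=s$ and invoke the rule on each $\Loop^s$; these are two equivalent ways of exploiting the Markov structure of the loop space, and your version makes the shift-invariance step more explicit. One small point worth recording when you write it out: the state $\vartheta_{i_0}$ is supported on only finitely many points (Prob-solvable loops have finitely many branches), so the integration in your final step is a finite sum and no uniform lower bound on $\P^{\Loop^s}(T^{\neg\Guard}=\infty)$ is needed.
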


\ifshort\else
\begin{proof}
	We prove the relaxation of the \gls{r-ast-rule}.
	The proof for the \gls{r-past-rule} is analogous. 
	Let $\Loop:= \Init\ while\ \Guard\ do\ \Update\ end$ be as in Definition~\ref{def:ProbSolvable}.
	Assume $\Loop$ satisfies conditions \eqref{repsm-ast-rule:cond:pos}-\eqref{repsm-ast-rule:cond:bounded} of the \gls{r-ast-rule} for all $i \geq i_0$ for some fixed $i_0 \in \N$.
	Moreover, assume $\P(M_{i_0} < 0) > 0$.

	We construct again a probabilistic program $\mathcal{P}$ as in~\eqref{relaxed:PP}.
	Observe that for the second while-loop of $\mathcal{P}$, we have $i \geq i_0$.
	By assumption, the second while-loop of $\mathcal{P}$ satisfies conditions \eqref{repsm-ast-rule:cond:pos}-\eqref{repsm-ast-rule:cond:bounded} of the \gls{r-ast-rule}.
	By the \gls{r-ast-rule}, we conclude $\mathcal{P}$ being not \gls{ast}, if there is a $Cyl(\pi) \in \F^{\mathcal{P}}_{i_0}$, such that $\P^{\mathcal{P}}(Cyl(\pi)) > 0$ and $M_{i_0}(\vartheta) < 0$ for all $\vartheta \in Cyl(\pi)$.
	
	By the definition of $\mathcal{P}$, it then follows for $\Loop$ that if there is a $Cyl(\pi) \in \F^\Loop_{i_0}$, such that $\P^\Loop(Cyl(\pi)) > 0$ and $M_{i_0}(\vartheta) < 0$ for all $\vartheta \in Cyl(\pi)$, then $\Loop$ is not \gls{ast}.
	As $\P^\Loop(M_{i_0} < 0) > 0$, we conclude that such a $Cyl(\pi)$ exists and derive that $\Loop$ is not \gls{ast}.\qed
\end{proof}
Note that for a repulsing supermartingale $M$, the condition $\P(M_{i_0} < 0) > 0$ implies that there is a positive probability of reaching iteration $i_0$, because $M$ would have to be almost surely non-negative upon termination.
\fi

\ifshort
The proof of Theorem~\ref{relax-nonterm} is similar to the one of
Theorem~\ref{relax-term} and available in~\cite{moosbrugger2020automated}.
\fi
In what follows, whenever we write \gls{rsm-rule}, \gls{sm-rule}, \gls{r-ast-rule} or \gls{r-past-rule} we refer to our relaxed versions of the proof rules.
	
	\section{Algorithmic Termination Analysis through Asymptotic Bounds}
\label{section:algorithms}

The \emph{two major challenges when automating reasoning} with the
proof rules of Sections~\ref{section:proof-rules} and \ref{section:relaxations} are
(i) constructing expressions $M$ over the program variables and
(ii) proving inequalities involving ${\E(M_{i+1} - M_i \mid \F_i)}$.
In this section, we address these two challenges for Prob-solvable loops.
For the loop guard $\Guard_\Loop = P > Q$, let $G_\Loop$ denote the polynomial $P - Q$.
As before, if $\Loop$ is clear from the context, we omit the subscript $\Loop$.
It holds that $G > 0$ is equivalent to $\Guard$.

\paragraph{(i) Constructing (super)martingales $M$:}

For a Prob-solvable loop $\Loop$, the polynomial $G$ is a natural candidate for the expression $M$ in termination proof rules (\gls{rsm-rule}, \gls{sm-rule}) and $-G$ in the non-termination proof rules (\gls{r-ast-rule}, \gls{r-past-rule}).
Hence, we construct potential (super)martingales $M$ by setting $M := G$ for the \gls{rsm-rule} and the \gls{sm-rule}, and $M := -G$ for the \gls{r-ast-rule} and the \gls{r-past-rule}.
The property $\Guard \implies G > 0$, a condition of the \gls{rsm-rule} and the \gls{sm-rule}, trivially holds.
Moreover, for the \gls{r-ast-rule} and \gls{r-past-rule} the condition $\neg \Guard \implies -G \geq 0$ is satisfied.
The remaining conditions of the proof rules are:
\begin{itemize}
	\item \gls{rsm-rule}:
		(a) $\Guard_i \implies \E(G_{i+1} - G_i \mid \F_i) \leq - \epsilon$ for some $\epsilon > 0$
	\item \gls{sm-rule}:
		(a) $\Guard_i \implies \E(G_{i+1} - G_i \mid \F_i) \leq 0$ and
		(b) $\Guard_i \implies \P(G_{i + 1} - G_i \leq - d \mid \F_i) \geq p$ for some $p \in (0, 1]$ and $d \in \R^+$
		(for the purpose of efficient automation, we restrict the functions $d(r)$ and $p(r)$ to be constant)
	\item \gls{r-ast-rule}:
		(a) $\Guard_i \implies \E(-G_{i+1} {+} G_i \mid \F_i) \leq - \epsilon$ for some $\epsilon > 0$ and
		(b) $|G_{i+1} - G_i| \leq c$, for some $c > 0$.
\end{itemize}
All these conditions express bounds over $G_i$.
Choosing $G$ as the potential witness may seem simplistic.
However, Example~\ref{ex:limit-rsm-rule} already illustrated how our relaxed proof rules can mitigate the need for more complex witnesses (even exponential ones).
\emph{The computational effort in our approach does not lie in synthesizing a complex witness but in constructing asymptotic bounds for the loop guard.}
Our approach can therefore be seen as complementary to approaches synthesizing more complex witnesses~\cite{chakarov_probabilistic_2013,chatterjee_termination_2016,chatterjee_stochastic_2017}.
The martingale expression $\E(G_{i+1} - G_i \mid \F_i)$ is an expression over program variables, whereas $G_{i+1} - G_i$ cannot be interpreted as a single expression but through a distribution of expressions.

\begin{definition}[One-step Distribution]\sloppy
	For expression $H$ over the program variables of Prob-solvable loop $\Loop$, let the \emph{one-step distribution} $\Update_\Loop^H$ be defined by ${E \mapsto \P(H_{i+1} = E \mid \F_i)}$ with support set $\supp(\Update_\Loop^H) := \{ B \mid \Update_\Loop^H(B) > 0 \}$.
	We refer to expressions $B \in \supp(\Update_\Loop^H)$ by \emph{branches of $H$}.
\end{definition}
The notation $\Update_\Loop^H$ is chosen to suggest that the loop body $\Update_\Loop$ is ``applied'' to the expression $H$, leading to a distribution over expressions.
Intuitively, the support $\supp(\Update_\Loop^H)$ of an expression $H$ contains all possible updates of $H$ after executing a single iteration of $\Update_\Loop$.

\ifshort\else
\begin{example}[One-step Distribution]
	Consider the following Prob-solvable loop:
	
	\begin{minipage}{\linewidth}
		\begin{probloop}
			x := 1, y := 1
			
			\CWhile{$x > 0$}{
				\pvar{y} := $\pvar{y} + 1$ \prob{\nicefrac{1}{2}} $\pvar{y} + 2$
				
				\pvar{x} := $\pvar{x} + y$ \prob{\nicefrac{1}{3}} $\pvar{x} - y$
			}
		\end{probloop}
		\vspace{0.5em}
	\end{minipage}
	For the expression $H := x^2$, the one-step distribution $\Update_\Loop^H$ is as follows:
	
	\begin{minipage}{\linewidth}
		\begin{minipage}{0.4\linewidth}
			\begin{table}[H]
				\bgroup
				\tiny
				\def\arraystretch{1.5}
				\begin{tabular}{cc}
					\toprule
					Expression E & $\Update_\Loop^H(E)$ \\
					\midrule
					$x_i^2 + 2x_iy_i + 2x_i + y_i^2 + 2y_i + 1$ & $\nicefrac{1}{6}$ \\
					\hdashline
					$x_i^2 + 2x_iy_i + 4x_i + y_i^2 + 4y_i + 4$ & $\nicefrac{1}{6}$ \\
					\hdashline
					$x_i^2 - 2x_iy_i - 2x_i + y_i^2 + 2y_i + 1$ & $\nicefrac{1}{3}$ \\
					\hdashline
					$x_i^2 - 2x_iy_i - 4x_i + y_i^2 + 4y_i + 4$ & $\nicefrac{1}{3}$ \\
					\hdashline
					Any other $E$ & $0$ \\
					\bottomrule
				\end{tabular}
				\egroup
			\end{table}
		\end{minipage}
		\begin{minipage}{0.6\linewidth}
			\vspace{2em}
			The first entry in the table can be derived like:
			\begin{flalign*}
				x_{i+1}^2 &=
				(x_i + y_{i+1})^2 = x_i^2 + 2 x_i y_{i+1} + y_{i+1}^2 \\
				& \text{(with probability $\nicefrac{1}{3}$)} \\
				&= x_i^2 + 2 x_i (y_i + 1) + (y_i + 1)^2 \\
				& \text{(with probability $\nicefrac{1}{2} \cdot \nicefrac{1}{3}$)} \\
				&= x_i^2 + 2 x_i y_i + 2 x_i + y_i^2 + 2 y_i + 1 \\
				& \text{(with probability $\nicefrac{1}{6}$)}
			\end{flalign*}
		\end{minipage}
	\end{minipage}
\end{example}
\fi

\paragraph{(ii) Proving inequalities involving $\E(M_{i+1} - M_i \mid \F_i)$:}
To automate the termination analysis of $\Loop$ with the proof rules from Section~\ref{section:proof-rules}, we need to compute bounds for the expression ${\E(G_{i+1} - G_i \mid \F_i)}$ as well as for the branches of $G$.
In addition, our relaxed proof rules from Section~\ref{section:relaxations} only need asymptotic bounds, i.e. bounds which hold eventually.
In Section~\ref{subsec:bounds}, we propose Algorithm~\ref{alg:monom-bound} for computing \emph{asymptotic lower and upper bounds} for any polynomial expression over program variables of $\Loop$.
Our procedure allows us to derive bounds for $\E(G_{i+1} - G_i \mid \F_i)$ and the branches of $G$.
Before formalizing our method, let us first illustrate how reasoning with asymptotic bounds helps to apply termination proof rules to $\Loop$.

\begin{example}[Asymptotic Bounds for the \gls{rsm-rule}]
	\label{ex:bounds:rsm}
	Consider the following program:
	
	\begin{minipage}{\linewidth}
		\begin{probloop}
			x := 1, y := 0
			
			\CWhile{$x < 100$}{
				\pvar{y} := $\pvar{y} + 1$
				
				\pvar{x} := $2\pvar{x} + y^2$ \prob{\nicefrac{1}{2}} $\nicefrac{1}{2} \cdot \pvar{x}$
			}
		\end{probloop}
		\vspace{0.5em}
	\end{minipage}
	Observe $y_i = i$.
	The martingale expression for $G = 100 - x$ is
	$\E(G_{i+1} - G_i \mid \F_i) = \nicefrac{1}{2}(100 - 2 x_i - (i+1)^2) + \nicefrac{1}{2}(100 - \nicefrac{x_i}{2}) - (100 -x_i) = -\nicefrac{x_i}{4} - \nicefrac{i^2}{2} - i - \nicefrac{1}{2}.$
	Note that if the term $- \nicefrac{x_i}{4}$ would not be present in $\E(G_{i+1} - G_i \mid \F_i)$, we could certify the program to be \gls{past} using the \gls{rsm-rule} because $-\nicefrac{i^2}{2} - i - \nicefrac{1}{2} \leq - \nicefrac{1}{2}$ for all $i \geq 0$.
	However, by taking a closer look at the variable $x$, we observe that it is \emph{eventually} and almost surely lower bounded by the function $\alpha \cdot 2^{-i}$ for some $\alpha \in \R^+$.
	Therefore, \emph{eventually} $- \nicefrac{x_i}{4} \leq - \beta \cdot 2^{-i}$ for some $\beta \in \R^+$.
	Thus, \emph{eventually} $\E(G_{i+1} - G_i \mid \F_i) \leq -\gamma \cdot i^2$ for some $\gamma \in \R^+$.
	By our \gls{rsm-rule}, the program is \gls{past}.
	
	Now, the question arises how the asymptotic lower bound $\alpha \cdot 2^{-i}$ for $x$ can be computed automatically.
	In every iteration, $x$ is either updated with $2x + y^2$ or $\nicefrac{1}{2} \cdot x$.
	Considering the updates as recurrences, we have the inhomogeneous parts $y^2$ and $0$.
	Asymptotic lower bounds for these parts are $i^2$ and $0$, respectively, where $0$ is the ``asymptotically smallest one``.
	Taking $0$ as the inhomogeneous part, we construct two recurrences: (1) $l_0 = \alpha,\ l_{i+1} = 2 l_i + 0$ and (2) $l_0 = \alpha,\ l_{i+1} = \nicefrac{1}{2} \cdot l_i + 0$, for some $\alpha \in \R^+$.
	Solutions to these recurrences are $\alpha \cdot 2^i$ and $\alpha \cdot 2^{-i}$, where the last one is the desired lower bound because it is ``asymptotically smaller``.
	We will formalize this idea of computing asymptotic bounds in Algorithm~\ref{alg:monom-bound}.
\end{example}

We next present our method for computing asymptotic bounds over martingale expressions in Sections~\ref{subsec:monomials}-\ref{subsec:bounds}.
Based on these asymptotic bounds, in Section~\ref{subsec:algorithms-proof-rules} we introduce
algorithmic approaches for our proof rules from Section~\ref{section:relaxations}, solving our aforementioned challenges (i)-(ii) in a fully automated manner (Section~\ref{subsec:rule-out-rules}).

\subsection{Prob-solvable Loops and Monomials}
\label{subsec:monomials}

Algorithm~\ref{alg:monom-bound} computes asymptotic bounds on monomials over program variables in a recursive manner.
To ensure termination of Algorithm~\ref{alg:monom-bound}, it is important that there are no circular dependencies among monomials.
By the definition of Prob-solvable loops, this indeed holds for program variables (monomials of order 1).
Every Prob-solvable loop $\Loop$ comes with an ordering on its variables and every variable is restricted to only depend linearly on itself and polynomially on previous variables.
Acyclic dependencies naturally extend from single variables to monomials.

\begin{definition}[Monomial Ordering]
	Let $\Loop$ be a Prob-solvable loop with variables $x_{(1)}, ..., x_{(m)}$.
	Let $y_1 = \prod_{j=1}^{m} x_{(j)}^{p_j}$ and $y_2 = \prod_{j=1}^{m} x_{(j)}^{q_j}$, where $p_j, q_j \in \N$, be two monomials over the program variables.
	The \emph{order $\preceq$ on monomials} over the program variables of $\Loop$ is defined by
	$y_1 \preceq y_2 \iff (p_m, ..., p_1) \leq_{lex} (q_m, ..., q_1)$,
	where $\leq_{lex}$ is the lexicographic order on $\N^m$.
	The order $\preceq$ is total because $\leq_{lex}$ is total.
	With $y_1 \prec y_2$ we denote $y_1 \preceq y_2 \land y_1 \neq y_2$.
\end{definition}

\ifshort\else
\begin{example}[Monomials]
	Let $\Loop$ be a Prob-solvable loop with variables $x_{(1)}, ..., x_{(m)}$.
	The following statements hold for the monomial order $\preceq$:
	
	$1 \prec x_{(1)} \prec x_{(2)} \prec ... \prec x_{(m-1)} \prec x_{(m)}$, $x_{(1)}^k \prec x_{(2)}$ for any $k \in \N$
	
	$x_{(1)}^2 \prec x_{(1)}^3$ and $x_{(3)}^4 x_{(2)}^{100} x_{(1)}^{99} \prec x_{(3)}^5 x_{(2)}^2 x_{(1)}^3$.
\end{example}
\fi

\noindent
To prove acyclic dependencies for monomials we exploit the following fact.
\begin{lemma}
	\label{lemma:monomial-order}
	Let $y_1, y_2, z_1, z_2$ be monomials.
	If $y_1 \preceq z_1$ and $y_2 \preceq z_2$ then $y_1 \cdot y_2 \preceq z_1 \cdot z_2$.
\end{lemma}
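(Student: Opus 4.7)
The plan is to unfold the definition and reduce the claim to the well-known fact that the lexicographic order on $\mathbb{N}^m$ is compatible with coordinate-wise addition. Writing $y_1 = \prod_j x_{(j)}^{p_j}$, $z_1 = \prod_j x_{(j)}^{q_j}$, $y_2 = \prod_j x_{(j)}^{p'_j}$, $z_2 = \prod_j x_{(j)}^{q'_j}$, the hypotheses become $(p_m,\dots,p_1) \leq_{lex} (q_m,\dots,q_1)$ and $(p'_m,\dots,p'_1) \leq_{lex} (q'_m,\dots,q'_1)$, and the goal is $(p_m+p'_m,\dots,p_1+p'_1) \leq_{lex} (q_m+q'_m,\dots,q_1+q'_1)$. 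So the entire lemma boils down to the monotonicity of $\leq_{lex}$ under addition.

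I would then proceed by a short case analysis. If either inequality is actually an equality, the result is immediate: adding a fixed vector to the other inequality does not change any of the component comparisons, so the lex relation is preserved. The nontrivial case is $y_1 \prec z_1$ and $y_2 \prec z_2$. Let $k$ be the largest index with $p_k < q_k$ (so $p_j = q_j$ for $j > k$) and let $\ell$ be the largest index with $p'_\ell < q'_\ell$ (so $p'_j = q'_j$ for $j > \ell$). I would then split on whether $k > \ell$, $k < \ell$, or $k = \ell$, and in each subcase identify the largest index at which the summed exponent vectors disagree and verify that the product vector of $y_1 y_2$ is strictly smaller there. For instance, when $k = \ell$ the disagreement at index $k$ is strict because $p_k + p'_k < q_k + q'_k$, while all higher indices agree componentwise; when $k \neq \ell$ the larger of the two provides the decisive coordinate, since at every strictly higher index both summands agree.

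I expect no serious obstacle: the argument is essentially bookkeeping. The one subtlety worth stating explicitly is that the exponent vectors are compared in reverse order $(p_m,\dots,p_1)$, so the ``leading coordinate'' is the exponent of the highest-indexed variable; this is consistent with the intuition from Definition of Prob-solvable loops, where $x_{(m)}$ dominates and can only depend polynomially on lower-indexed variables. With this convention fixed, the case analysis above yields $y_1 y_2 \preceq z_1 z_2$ in all cases, with strictness exactly when at least one of the two input relations is strict.
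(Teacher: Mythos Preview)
Your proposal is correct and follows essentially the same approach as the paper's own proof: unfold the definition to exponent vectors, split into the cases where one or both inequalities are equalities (immediate) versus both strict, and in the strict case compare the leading indices of disagreement (the paper handles your three subcases $k>\ell$, $k<\ell$, $k=\ell$ via a single ``w.l.o.g.\ $j\geq k$'' argument, but this is only a presentational difference).
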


\noindent
By structural induction over monomials and Lemma~\ref{lemma:monomial-order}, we establish:

\begin{lemma}[Monomial Acyclic Dependency]
	\label{lemma:monomial-acylic}
	Let $x$ be a monomial over the program variables of $\Loop$.
	For every branch $B \in \supp(\Update_\Loop^x)$ and monomial $y$ in $B$, $y \preceq x$ holds.
\end{lemma}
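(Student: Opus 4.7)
I would prove the lemma by (well-founded) induction on the monomial $x$ with respect to $\prec$, since $\prec$ is a total order on monomials with $1$ as the minimum element. The inductive hypothesis is that the claim holds for every monomial $x' \prec x$. There are three cases to consider: the constant monomial, a single variable, and a proper product, and the main work consists in splitting the last case into factors and invoking Lemma~\ref{lemma:monomial-order}.

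\textbf{Base cases.} If $x = 1$, then $\Update_\Loop^x$ has the single branch $1$ and the claim is immediate. If $x = x_{(j)}$ is a single variable, then by Definition~\ref{def:ProbSolvable} the assignment to $x_{(j)}$ in the loop body produces branches of the form $a_{(jk)} x_{(j)} + P_{(jk)}$ where $P_{(jk)} \in \R[x_{(1)},\dots,x_{(j-1)}]$. However, because the loop body is a sequence of assignments, the variables $x_{(1)},\dots,x_{(j-1)}$ occurring in $P_{(jk)}$ have already been updated. A branch $B \in \supp(\Update_\Loop^{x_{(j)}})$ is therefore obtained by substituting, for each $l<j$, some branch of $\Update_\Loop^{x_{(l)}}$ (under a consistent random choice) into $P_{(jk)}$ and adding $a_{(jk)}x_{(j)}$. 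The monomial $x_{(j)}$ contributed by the first summand satisfies $x_{(j)} \preceq x_{(j)}$. For every other monomial $y$ appearing in $B$, $y$ arises by multiplying monomials from the updates of $x_{(1)},\dots,x_{(j-1)}$; each of these is $\preceq x_{(l)}$ for some $l<j$ by the inductive hypothesis, and hence lies in $\R[x_{(1)},\dots,x_{(j-1)}]$. Any such monomial has zero exponent in $x_{(j)},\dots,x_{(m)}$, so by the lexicographic definition of $\preceq$ it is strictly $\prec x_{(j)}$. Combining these contributions via repeated application of Lemma~\ref{lemma:monomial-order} yields $y \preceq x_{(j)}$ as required.

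\textbf{Inductive step.} If $x$ is not a single variable and not $1$, decompose $x = y \cdot z$ into two non-trivial factors with $y, z \prec x$. For any fixed (consistent) random choice $\omega$ of the loop body, the resulting new values satisfy $x_{\text{new}}(\omega) = y_{\text{new}}(\omega) \cdot z_{\text{new}}(\omega)$, where $y_{\text{new}}(\omega)$ is a branch $B_y \in \supp(\Update_\Loop^y)$ and $z_{\text{new}}(\omega)$ is a branch $B_z \in \supp(\Update_\Loop^z)$. Therefore every branch $B \in \supp(\Update_\Loop^x)$ arises as (the algebraic expansion of) a product $B_y \cdot B_z$ of such branches. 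Every monomial $y^\ast$ in $B$ is then a product $y_1 \cdot z_1$ of a monomial $y_1$ of $B_y$ and a monomial $z_1$ of $B_z$. By the inductive hypothesis, $y_1 \preceq y$ and $z_1 \preceq z$; by Lemma~\ref{lemma:monomial-order} this gives $y^\ast = y_1 \cdot z_1 \preceq y \cdot z = x$, completing the induction.

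\textbf{Main obstacle.} The only delicate point is that the decomposition in the inductive step must be performed modulo the shared randomness of a single loop iteration: branches of the product monomial $x = y\cdot z$ are not obtained by independently mixing branches of $y$ and $z$, but only by pairs $(B_y, B_z)$ that arise from the same random choice $\omega$. For a support-level statement this distinction is immaterial — every monomial in any branch of $\Update_\Loop^x$ is still an expanded product of one monomial from $B_y$ and one from $B_z$ — so the argument goes through. The rest is a mechanical application of Lemma~\ref{lemma:monomial-order}.
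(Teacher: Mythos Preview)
Your proof is correct and follows essentially the same route as the paper: induction on monomials, with the product case handled by splitting $x = s \cdot t$ and applying Lemma~\ref{lemma:monomial-order} to the factors. The paper phrases this as ``structural induction over monomials'' with the single-variable case as the base, whereas you use well-founded induction on $\prec$ and treat the single-variable case as an inductive case that appeals to the hypothesis for the strictly smaller variables $x_{(l)}$ with $l<j$. Your version is more explicit here---the paper simply says the base case ``holds by the definition of $\Loop$ being a Prob-solvable loop'', leaving the reader to unwind the sequential substitution of already-updated variables, while you spell out that the substituted branches of each $x_{(l)}$ lie in $\R[x_{(1)},\dots,x_{(l)}]$ and hence yield monomials strictly below $x_{(j)}$. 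Your remark about shared randomness (that branches of $y\cdot z$ arise only from coupled pairs $(B_y,B_z)$, which is immaterial at the level of supports) is a valid observation that the paper does not make; both arguments are fine without it since only the inclusion $\supp(\Update_\Loop^{x}) \subseteq \{B_y\cdot B_z\}$ is needed.
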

\ifshort\else
\begin{proof}
	We use structural induction over monomials.
	The base case for which $x$ is a single variable holds by the definition of $\Loop$ being a Prob-solvable loop.
	Let $x := s \cdot t$ where $s$ and $t$ are monomials over the variables of $\Loop$ and
	\begin{itemize}
		\item for every $B_s \in \supp(\Update_\Loop^s)$ and every monomial $u$ in $B_s$ it holds that $u \preceq s$,
		\item for every $B_t \in \supp(\Update_\Loop^t)$ and every monomial $w$ in $B_t$ it holds that $w \preceq t$,
	\end{itemize}
	Let $B \in \supp(\Update_\Loop^x)$ be an arbitrary branch of $x$.
	By definition of $\Update_\Loop^x$, we get $B = B_s \cdot B_t$, where $B_s$ is a branch of $s$ and $B_t$ is a branch of $t$.
	Note that $B_s$ and $B_t$ are polynomials over program variables or equivalently linear combinations of monomials.
	Therefore, for every monomial $y$ in $B$ we have $y = u \cdot w$ where $u$ is a monomial in $B_s$ and $w$ a monomial in $B_t$.
	By the induction hypothesis, $u \preceq s$ and $w \preceq t$.
	Using Lemma~\ref{lemma:monomial-order}, we get $u \cdot w \preceq s \cdot t$ which means $y \preceq x$.\qed
\end{proof}
\fi
Lemma~\ref{lemma:monomial-acylic} states that the value of a monomial $x$ over the program variables of $\Loop$ only depends on the value of monomials $y$ which precede $x$ in the monomial ordering $\preceq$.
This ensures the dependencies among monomials over the program variables of $\Loop$ to be acyclic.

\subsection{Computing Asymptotic Bounds for Prob-solvable Loops}
\label{subsec:bounds}

The structural result on monomial dependencies from Lemma~\ref{lemma:monomial-acylic} allows for recursive procedures over monomials.
This is exploited in Algorithm~\ref{alg:monom-bound} for computing asymptotic bounds for monomials. 
The standard Big-O notation does not differentiate between positive and negative functions, as it considers the absolute value of functions.
We, however, need to differentiate between functions like $2^i$ and $-2^i$.
Therefore, we introduce the notions of \emph{Domination} and \emph{Bounding Functions}.

\begin{definition}[Domination]
	Let $F$ be a finite set of functions from $\N$ to $\R$.
	A function $g : \N \to \R$ is \emph{dominating} $F$ if eventually $\alpha \cdot g(i) \geq f(i)$ for all $f \in F$ and some $\alpha \in \R^+$.
	A function $g : \N \to \R$ is \emph{dominated by} $F$ if all $f \in F$ dominate $\{g\}$.
\end{definition}
Intuitively, a function $f$ dominates a function $g$ if $f$ eventually surpasses $g$ modulo a positive constant factor.
\emph{Exponential polynomials} are sums of products of polynomials with exponential functions, i.e. $\sum_{j} p_j(x) \cdot c_j^x$, where $c_j \in \R_0^+$.
All functions arising in Algorithms \ref{alg:monom-bound}-\ifshort \ref{alg:r-ast-rule} \else \ref{alg:r-past-rule} \fi are exponential polynomials. 
For a finite set $F$ of exponential polynomials, a function dominating $F$ and a function dominated by $F$ are easily computable with standard techniques, by analyzing the terms of the functions in the finite set $F$.
With $\dominating(F)$ we denote an algorithm computing an exponential polynomial dominating $F$.
With $\dominated(F)$ we denote an algorithm computing an exponential polynomial dominated by $F$.
We assume the functions returned by the algorithms $\dominating(F)$ and $\dominated(F)$ to be monotone and either non-negative or non-positive.

\begin{example}[Domination]
	The following statements are true:
	$0$ dominates $\{ -i^3 + i^2 + 5 \}$, $i^2$ dominates $\{ 2 i^2 \}$, $i^2 \cdot 2^i$ dominates $\{ i^2 \cdot 2^i + i^9, i^5 + i^3, 2^{-i} \}$, $i$ is dominated by $\{ i^2 - 2i + 1, \frac{1}{2} i - 5 \}$ and $-2^{i}$ is dominated by $\{ 2^i - i^2, -10 \cdot 2^{-i}\}$.
\end{example}

\begin{definition}[Bounding Function for $\Loop$]
	Let $E$ be an arithmetic expression over the program variables of $\Loop$.
	Let $l, u : \N \to \R$ be monotone and non-negative or non-positive.
	
	\begin{enumerate}
		\item
		$l$ is a \emph{lower bounding function} for $E$ if eventually $\P(\alpha \cdot l(i) \leq E_i \mid T^{\neg \Guard} > i) = 1$ for some $\alpha \in \R^+$.
		
		\item 
		$u$ is an \emph{upper bounding function} for $E$ if eventually $\P(E_i \leq \alpha \cdot u(i) \mid T^{\neg \Guard} > i) = 1$ for some $\alpha \in \R^+$.
		
		\item
		An \emph{absolute bounding function} for $E$ is an upper bounding function for $|E|$.
	\end{enumerate}
\end{definition}
A bounding function imposes a bound on an expression $E$ over the program variables holding eventually, almost surely, and modulo a positive constant factor.
Moreover, bounds on $E$ only need to hold as long as the program has not yet terminated.

Given a Prob-solvable loop $\Loop$ and a monomial $x$ over the program variables of $\Loop$, Algorithm~\ref{alg:monom-bound} computes a lower and upper bounding function for $x$.
Because every polynomial expression is a linear combination of monomials, the procedure can be used to compute lower and upper bounding functions for any polynomial expression over $\Loop$'s program variables by substituting every monomial with its lower or upper bounding function depending on the sign of the monomial's coefficient.
Once a lower bounding function $l$ and an upper bounding function $u$ are computed, an absolute bounding function can be computed by $\dominating(\{ u, -l \})$.

In Algorithm~\ref{alg:monom-bound}, candidates for bounding functions are modeled using recurrence relations.
Solutions $s(i)$ of these recurrences are closed-form candidates for bounding functions parameterized by loop iteration $i$.
Algorithm~\ref{alg:monom-bound} relies on the existence of closed-form solutions of recurrences.
While closed-forms of general recurrences do not always exist, a property of \emph{C-finite recurrences}, linear recurrences with constant coefficients, is that their closed-forms always exist and are computable~\cite{kauers_concrete_2011}.
In all occurring recurrences, we consider a monomial over program variables as a single function.
Therefore, throughout this section, all recurrences arising from a Prob-solvable loop $\Loop$ in Algorithm~\ref{alg:monom-bound} are C-finite or can be turned into C-finite recurrences.
Moreover, closed-forms $s(i)$ of C-finite recurrences are given by exponential polynomials.
Therefore, for any solution $s(i)$ to a C-finite recurrence and any constant $r \in \R$, the following holds:
\begin{equation}
\exists \alpha, \beta \in \R^+, \exists i_0 \in \N: \forall i \geq i_0 : \alpha \cdot s(i) \leq s(i + r) \leq \beta \cdot s(i).
\end{equation}
Intuitively, the property states that constant shifts do not change the asymptotic behavior of $s$.
We use this property at various proof steps in this section.
Moreover, we recall that limits of exponential polynomials are computable~\cite{gruntz1996}.

For every monomial $x$, every branch $B \in \supp(\Update_\Loop^x)$ is a polynomial over the program variables.
Let $\rec(x) := \{ \text{coefficient of $x$ in $B$} \mid B \in \supp(\Update_\Loop^x) \}$ denote the set of coefficients of the monomial $x$ in all branches of $\Loop$.
Let $\inhom(x) := \{ B - c \cdot x \mid B \in \supp(\Update_\Loop^x) \text{ and $c = $ coefficient of $x$ in $B$}\}$ denote all the branches of the monomial $x$ without $x$ and its coefficient.
The symbolic constants $c_1$ and $c_2$ in Algorithm~\ref{alg:monom-bound} represent arbitrary initial values of the monomial $x$ for which bounding functions are computed.
The fact that they are symbolic ensures that all potential initial values are accounted for.
$c_1$ represents positive initial values and $- c_2$ negative initial values.
The symbolic constant $d$ is used in the recurrences to account for the fact that the bounding functions only hold modulo a constant.
Intuitively, if we use the bounding function in a recurrence we need to restore the lost constant.
$\oversign(x)$ is an over-approximation of the sign of the monomial $x$, i.e., if $\exists i : \P(x_i > 0) > 0$, then $+ \in \oversign(x)$ and if $\exists i : \P(x_i < 0) > 0$, then $- \in \oversign(x)$.

\begin{algorithm}
	\KwIn{A Prob-solvable loop $\Loop$ and a monomial $x$ over $\Loop$'s variables}
	\KwOut{Lower and upper bounding functions $l(i)$, $u(i)$ for $x$}
	
	$\inhomboundsupper := \{ \text{upper bounding function of $P$} \mid P \in \inhom(x) \}$ \hfill {\scriptsize (recursive call)} \label{alg:monom-bound:upper}
	
	$\inhomboundslower := \{ \text{lower bounding function of $P$} \mid P \in \inhom(x) \}$ \hfill {\scriptsize (recursive call)} \label{alg:monom-bound:lower}
	
	$U(i) := \dominating(\inhomboundsupper)$ \label{alg:monom-bound:U}
	
	$L(i) := \dominated(\inhomboundslower)$
	
	$\maxrec := \max \rec(x)$
	
	$\minrec := \min \rec(x)$ \label{alg:monom-bound:minRec}
	
	$I := \emptyset$
	
	\lIf{$+ \in \oversign(x)$}{
		$I := I \union \{c_1\}$
	}
	\lIf{$- \in \oversign(x)$}{
		$I := I \union \{-c_2\}$
	}
	
	$\uppercandidates := $ closed-forms of $\{ y_{i+1} = r \cdot y_{i} + d \cdot U(i) \mid r \in \{ \minrec, \maxrec\}, y_0 \in I\}$ \label{alg:monom-bound:recurrences}
	
	$\lowercandidates := $ closed-forms of $\{ y_{i+1} = r \cdot y_{i} + d \cdot L(i) \mid r \in \{ \minrec, \maxrec\}, y_0 \in I\}$
	
	$u(i) := \dominating(\uppercandidates)$ \label{alg:monom-bound:dominating}
	
	$l(i) := \dominated(\lowercandidates)$ \label{alg:monom-bound:dominated}
	
	\Return $l(i), u(i)$
	
	\caption{Computing bounding functions for monomials}
	\label{alg:monom-bound}
\end{algorithm}

Lemma~\ref{lemma:monomial-acylic}, the computability of closed-forms of C-finite recurrences and the fact that within a Prob-solvable loop only finitely many monomials can occur, implies the termination of Algorithm~\ref{alg:monom-bound}.
Its correctness is stated in the next theorem.

\begin{theorem}[Correctness of Algorithm~\ref{alg:monom-bound}]
	The functions $l(i), u(i)$ returned by Algorithm~\ref{alg:monom-bound} on input $\Loop$ and $x$ are a lower- and an upper bounding function for $x$, respectively.
\end{theorem}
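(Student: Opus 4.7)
I would prove correctness by structural induction on the monomial $x$ with respect to the well-founded order $\preceq$ introduced in Section~\ref{subsec:monomials}. The base case is the constant monomial $1$, for which a constant function is trivially both a lower and an upper bounding function; Algorithm~\ref{alg:monom-bound} indeed returns such a constant, since $\inhom(1) = \{0\}$ and the recurrences collapse to $y_{i+1} = y_i$. The inductive hypothesis is that for every monomial $y \prec x$ the algorithm terminates and returns a valid pair of bounding functions. By Lemma~\ref{lemma:monomial-acylic}, every monomial appearing in any $P \in \inhom(x)$ is $\prec x$, so all recursive calls in lines~\ref{alg:monom-bound:upper}--\ref{alg:monom-bound:lower} are on strictly smaller monomials and the hypothesis applies. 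Since an upper (resp.\ lower) bounding function for a polynomial is obtained by summing upper (resp.\ lower) bounding functions of its monomials according to coefficient signs, $U(i)$ and $L(i)$ are asymptotic upper and lower bounding functions, respectively, for every $P \in \inhom(x)$ almost surely on runs with $T^{\neg\Guard} > i$.

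Next I would analyze one step of $x$. Every branch of $x$ has the form $c \cdot x_i + P$ with $c \in \rec(x) \subseteq \R_0^+$ and $P \in \inhom(x)$. Since $c \geq 0$, the map $c \mapsto c \cdot x_i$ is monotone in $c$ with sign determined by $\sgn(x_i)$: when $x_i \geq 0$ the upper envelope is attained at $\maxrec$ and the lower envelope at $\minrec$; when $x_i \leq 0$ the roles swap. Combining this with the inductive bounds on $P$, there exist positive constants $\alpha, \beta$ (independent of $i$) such that, almost surely and eventually on non-terminated runs,
\begin{equation*}
\minrec \cdot x_i + \beta L(i) \;\leq\; x_{i+1} \;\leq\; \maxrec \cdot x_i + \alpha U(i) \quad\text{when } x_i \geq 0,
\end{equation*}
and symmetrically with $\minrec$ and $\maxrec$ swapped when $x_i \leq 0$. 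The set $\uppercandidates$ (line~\ref{alg:monom-bound:recurrences}) comprises the closed forms of exactly the linear recurrences capturing each of these envelopes, with symbolic $d$ absorbing the multiplicative constant in front of $U(i)$ and with $y_0 \in \{c_1, -c_2\}$ ranging over both sign regimes consistent with $\oversign(x)$.

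I would then argue by induction on $i$ (inside the outer induction on $\preceq$) that $x_i$ is bounded above almost surely by one of the candidate closed forms evaluated at some fixed instances of $c_1, c_2, d$ depending on initial values and the inductive constants, exploiting the fact that C-finite closed forms are stable under constant index shifts. The dominating function $u(i)$ from line~\ref{alg:monom-bound:dominating} then majorises any such candidate, yielding the required upper bounding property. The dominated function $l(i)$ from line~\ref{alg:monom-bound:dominated} is handled symmetrically using $L(i)$ and the lower envelope recurrences.

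The main obstacle I anticipate is the sign-switching of $x_i$ along a run: because whether the upper envelope uses $\maxrec$ or $\minrec$ depends on $\sgn(x_i)$, a naive one-step comparison does not unroll cleanly into the linear recurrence. I would resolve this by showing that the four closed-form candidates \emph{jointly} dominate both possible envelopes at each step, so their pointwise maximum forms an invariant upper bound regardless of the sign trajectory; monotonicity and non-negativity (or non-positivity) of the functions returned by $\dominating$ and $\dominated$, assumed in Section~\ref{subsec:bounds}, is what makes this pointwise-maximum argument uniform in $i$. A secondary technical point is that the finitely many multiplicative constants accumulated by the recursion are absorbed into the single symbolic $d$ and into $\alpha$ in the definition of bounding function, which the symbolic initial values $c_1, -c_2$ make valid for all possible starting magnitudes.
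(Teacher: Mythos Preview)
Your overall architecture---outer induction on the monomial order $\preceq$, recursive correctness of the calls on each $P\in\inhom(x)$ via Lemma~\ref{lemma:monomial-acylic}, then an inner induction on $i$ comparing $x_i$ to a closed-form candidate, finally invoking shift-stability of C-finite solutions---matches the paper's proof. The base case and the absorption of accumulated constants into the symbolic $c_1,c_2,d$ are also in line with what the paper does.

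Where your sketch diverges, and where there is a genuine gap, is the resolution of the sign-switching obstacle. You propose that the four candidates ``jointly dominate both possible envelopes at each step, so their pointwise maximum forms an invariant upper bound.'' But the pointwise maximum $M(i)=\max_j s_j(i)$ of recurrence solutions is not itself a solution of either recurrence, so the implication $x_i\le M(i)\Rightarrow x_{i+1}\le M(i+1)$ does not follow from the one-step envelope inequalities you set up. The paper instead makes a \emph{global} case split on the eventual sign behaviour of $x$: either $x_i<0$ almost surely for all large $i$, in which case the \emph{single} recurrence $y_{i+1}=\minrec\cdot y_i+\eta\, U(i)$ with negative initial value $m<0$ bounds $x$; or some run has $x_{l_2}\ge 0$ at some $l_2$, in which case one takes $m=\max_\vartheta x_{l_2}(\vartheta)\ge 0$ and the single recurrence $y_{i+1}=\maxrec\cdot y_i+\eta\, U(i)$. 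The crucial observation in the second case is that the bounding sequence $y_i$ itself stays non-negative (since $m\ge 0$, $\maxrec\ge 0$, and $U\ge 0$), whence $a_j x_i\le\maxrec\cdot y_i$ holds for every branch coefficient $a_j$ \emph{regardless of the sign of $x_i$}: if $x_i\ge 0$ use $a_j\le\maxrec$ and $x_i\le y_i$; if $x_i<0$ use $a_j x_i\le 0\le\maxrec\cdot y_i$. Thus one specific candidate already dominates $x$ through arbitrary sign trajectories, and $u(i)=\dominating(\uppercandidates)$ is valid because that one candidate sits inside $\uppercandidates$, not because the max inherits any step-by-step invariant.
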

\begin{proof}
	Intuitively, it has to be shown that regardless of the paths through the loop body taken by any program run, the value of $x$ is always eventually upper bounded by some function in $\uppercandidates$ and eventually lower bounded by some function in $\lowercandidates$ (almost surely and modulo positive constant factors).
	We show that $x$ is always eventually upper bounded by some function in $\uppercandidates$.
	The proof for the lower bounding function is analogous.
	
	Let $\vartheta \in \Sigma$ be a \emph{possible} program run, i.e. $\P(Cyl(\pi)) > 0$ for all finite prefixes $\pi$ of $\vartheta$.
	Then, for every $i \in \N$, if $T^{\neg \Guard}(\vartheta) > i$, the following holds:
	\begin{gather*}
		x_{i+1}(\vartheta) = a_{(1)} \cdot x_i(\vartheta) + P_{(1)i}(\vartheta)
		\text{\ \ or\ \ }
		x_{i+1}(\vartheta) = a_{(2)} \cdot x_i(\vartheta) + P_{(2)i}(\vartheta) \\
		\text{ or } ... \text{ or\ \ }
		x_{i+1}(\vartheta) = a_{(k)} \cdot x_i(\vartheta) + P_{(k)i}(\vartheta),
	\end{gather*}
	where $a_{(j)} \in \rec(x)$ and $P_{(j)} \in \inhom(x)$ are polynomials over program variables.
	Let $u_1(i), ..., u_k(i)$ be upper bounding functions of $P_{(1)}, ..., P_{(k)}$, which are computed recursively at line~\ref{alg:monom-bound:recurrences}.
	Moreover, let $U(i) := \dominating(\{u_1(i), ..., u_k(i)\})$, $\minrec = \min \rec(x)$ and $\maxrec = \max \rec(x)$.
	Let $l_0 \in \N$ be the smallest number such that for all $j \in \{1, ..., k\}$ and $i \geq l_0$:
	\begin{gather}
		\label{ps:bounds:b1}
		\P(P_{(j)i} \leq \alpha_j \cdot u_j(i) \mid T^{\neg \Guard} > i) = 1 \text{ for some } \alpha_j \in \R^+, \text{ and } \\
		\label{ps:bounds:b2}
		u_j(i) \leq \beta \cdot U(i) \text{ for some } \beta \in \R^+
	\end{gather}
	Thus, all inequalities from the bounding functions $u_j$ and the dominating function $U$ hold from $l_0$ onward.
	Because $U$ is a dominating function, it is by definition either non-negative or non-positive.
	Assume $U(i)$ to be non-negative, the case for which $U(i)$ is non-positive is symmetric.
	Using the facts \eqref{ps:bounds:b1} and \eqref{ps:bounds:b2}, we establish:
	For the constant $\gamma := \beta \cdot \max_{j=1..k} \alpha_j$, it holds that $\P(P_{(j)i} \leq \gamma \cdot U(i) \mid T^{\neg \Guard} > i) = 1$ for all $j \in \{1, ...,k\}$ and all $i \geq l_0$.
	Let $l_1$ be the smallest number such that $l_1 \geq l_0$ and $U(i + l_0) \leq \delta \cdot U(i)$ for all $i \geq l_1$ and some $\delta \in \R^+$.

	\paragraph{Case 1, $x_i$ is almost surely negative for all $i \geq l_1$:}
	
	Consider the recurrence relation $y_0 = m$, $y_{i+1} = \minrec \cdot y_i + \eta \cdot U(i)$, where $\eta := \max(\gamma, \delta)$ and $m$ is the maximum value of $x_{l_1}(\vartheta)$ among all possible program runs $\vartheta$.
	Note that $m$ exists because there are only finitely many values $x_{l_1}(\vartheta)$ for possible program runs $\vartheta$.
	Moreover, $m$ is negative by our case assumption.
	By induction, we get $\P(x_i \leq y_{i - l_1} \mid T^{\neg \Guard} > i) = 1$ for all $i \geq l_1$.
	Therefore, for a closed-form solution $s(i)$ of the recurrence relation $y_i$, we get $\P(x_i \leq s(i - l_1) \mid T^{\neg \Guard} > i) = 1$ for all $i \geq l_1$.
	We emphasize that $s$ exists and can effectively be computed because $y_i$ is C-finite.
	Moreover, $s(i - l_1) \leq \theta \cdot s(i)$ for all $i \geq l_2$ for some $l_2 \geq l_1$ and some $\theta \in \R^+$.
	Therefore, $s$ satisfies the bound condition of an upper bounding function.
	Also, $s$ is present in $\uppercandidates$ by choosing the symbolic constants $c_2$ and $d$ to represent $-m$ and $\eta$ respectively.
	The function $u(i) := \dominating(\uppercandidates)$, at line~\ref{alg:monom-bound:dominating}, is dominating $\uppercandidates$ (hence also $s$), is monotone and either non-positive or non-negative.
	Therefore, $u(i)$ is an upper bounding function for $x$.

	\paragraph{Case 2, $x_i$ is not almost surely negative for all $i \geq l_1$:}
	
	Thus, there is a possible program run $\vartheta'$ such that $x_i(\vartheta') \geq 0$ for some $i \geq l_1$.
	Let $l_2 \geq l_1$ be the smallest number such that $x_{l_2}(\hat{\vartheta}) \geq 0$ for some possible program run $\hat{\vartheta}$.
	This number certainly exists, as $x_i(\vartheta')$ is non-negative for some $i \geq l_1$.
	Consider the recurrence relation $y_0 = m$, $y_{i+1} = \maxrec \cdot y_i + \eta \cdot U(i)$, where $\eta := \max(\gamma, \delta)$ and $m$ is the maximum value of $x_{l_2}(\vartheta)$ among all possible program runs $\vartheta$.
	Note that $m$ exists because there are only finitely many values $x_{l_2}(\vartheta)$ for possible program runs $\vartheta$.
	Moreover, $m$ is non-negative because $m \geq x_{l_2}(\hat{\vartheta}) \geq 0$.
	By induction, we get $\P(x_i \leq y_{i - l_2} \mid T^{\neg \Guard} > i) = 1$ for all $i \geq l_2$.
	Therefore, for a solution $s(i)$ of the recurrence relation $y_i$, we get $\P(x_i \leq s(i - l_2) \mid T^{\neg \Guard} > i) = 1$ for all $i \geq l_2$.
	As above, $s$ exists and can effectively be computed because $y_i$ is C-finite.
	Moreover, $s(i - l_2) \leq \theta \cdot s(i)$ for all $i \geq l_3$ for some $l_3 \geq l_2$ and some $\theta \in \R^+$.
	Therefore, $s$ satisfies the bound condition of an upper bounding function
	Also, $s$ is present in $\uppercandidates$ by choosing the symbolic constants $c_1$ and $d$ to represent $m$ and $\eta$ respectively.
	The function $u(i) := \dominating(\uppercandidates)$, at line~\ref{alg:monom-bound:dominating}, is dominating $\uppercandidates$ (hence also $s$), is monotone and either non-positive or non-negative.
	Therefore, $u(i)$ is an upper bounding function for $x$.\qed
\end{proof}

\begin{example}[Bounding functions]
	\label{ex:monom-bound}
	We illustrate Algorithm~\ref{alg:monom-bound} by computing bounding functions for $x$ and the Prob-solvable loop from Example~\ref{ex:bounds:rsm}:
	We have $\rec(x) := \{2, \frac{1}{2}\}$ and $\inhom(x) = \{ y^2, 0 \}$.
	Computing bounding functions recursively for $P \in \inhom(x) = \{ y^2, 0 \}$ is simple, as we can give exact bounds leading to $\inhomboundsupper = \{i^2, 0\}$ and $\inhomboundslower = \{i^2, 0\}$.
	Consequently, we get $U(i) = i^2$, $L(i) = 0$, $\maxrec = 2$ and $\minrec = \frac{1}{2}$.
	With a rudimentary static analysis of the loop, we determine the (exact) over-approximation $Sign(x) := \{+\}$ by observing that $x_0 > 0$ and all $P \in \inhom(x)$ are strictly positive.
	Therefore, $\uppercandidates$ is the set of closed-form solutions of the recurrences $y_0 := c_1$, $y_{i+1} := 2 y_i + d \cdot i^2$ and $y_0 := c_1$, $y_{i+1} := \frac{1}{2} y_i + d \cdot i^2$.
	Similarly, $\lowercandidates$ is the set of closed-form solutions of the recurrences $y_0 := c_1$, $y_{i+1} := 2 y_i$ and $y_0 := c_1$, $y_{i+1} := \frac{1}{2} y_i$.
	Using any algorithm for computing closed-forms of C-finite recurrences, we obtain $\uppercandidates = \{c_1 2^i - d i^2 - 2 d i + 3 d 2^i - 3 d,\ c_1 2^{-i} + 2 d i^2 - 8 d i - 12 d 2^{-i} + 12 d\}$ and $\lowercandidates = \{c_1 2^i,\ c_1 2^{-i} \}$.
	This leads to the upper bounding function $u(i) = 2^i$ and the lower bounding function $l(i) = 2^{-i}$.
	The bounding functions $l(i)$ and $u(i)$ can be used to compute bounding functions for expressions containing $x$ linearly by replacing $x$ by $l(i)$ or $u(i)$ depending on the sign of the coefficient of $x$.
	For instance, eventually and almost surely the following inequality holds:
	$- \frac{x_i}{4} -\frac{i^2}{2} - i - \frac{1}{2} \leq - \frac{1}{4} \cdot \alpha \cdot 2^{-i} -\frac{i^2}{2} - i - \frac{1}{2}$
	for some $\alpha \in \R^+$.
	The inequality results from replacing $x_i$ by $l(i)$.
	Therefore, eventually and almost surely $-\frac{x_i}{4} -\frac{i^2}{2} - i - \frac{1}{2} \leq - \beta \cdot i^2$ for some $\beta \in \R^+$.
	Thus, $-i^2$ is an upper bounding function for the expression $- \frac{x_i}{4} -\frac{i^2}{2} - i - \frac{1}{2}$.
\end{example}

\begin{remark}
	Algorithm~\ref{alg:monom-bound} describes a general procedure computing bounding functions for special sequences.
	Figuratively, that is for sequences $s$ such that $s_{i+1} = f(s_i, i)$ but in every step the function $f$ is chosen non-deterministically among a fixed set of special functions (corresponding to branches in our case).
	We reserve the investigation of applications of bounding functions for such sequences beyond the probabilistic setting for future work.
\end{remark}

\subsection{Algorithms for Termination Analysis of Prob-solvable Loops}
\label{subsec:algorithms-proof-rules}

Using Algorithm~\ref{alg:monom-bound} to compute bounding functions for polynomial expressions over program variables at hand, we are now able to formalize our algorithmic approaches automating the termination analysis of Prob-solvable loops using the proof rules from Section~\ref{section:relaxations}.
Given a Prob-solvable loop $\Loop$ and a polynomial expression $E$ over $\Loop$'s variables, we denote with $\lbf(E)$, $\ubf(E)$ and $\abf(E)$ functions computing a lower, upper and absolute bounding function for $E$ respectively.
Our algorithmic approach for proving \gls{past} using the \gls{rsm-rule} is given in Algorithm~\ref{alg:rsm-rule}.
\begin{algorithm}
	\KwIn{Prob-solvable loop $\Loop$}
	\KwOut{If \emph{true} then $\Loop$ with $G$ satisfies the \gls{rsm-rule}; hence $\Loop$ is \gls{past}}
	
	$E := \E(G_{i+1} - G_i \mid \F_i)$ \label{alg:rsm-rule:E}
	
	$u(i) := \ubf(E)$ \label{alg:rsm-rule:ubf}
	
	$\limit := \lim_{i \to \infty} u(i)$
	
	\Return $\limit < 0$ \label{alg:rsm-rule:true}
	
	\caption{Ranking-Supermartingale-Rule for proving \gls{past}}
	\label{alg:rsm-rule}
\end{algorithm}

\begin{example}[Algorithm~\ref{alg:rsm-rule}]
	Let us illustrate Algorithm~\ref{alg:rsm-rule} with the Prob-solvable loop from Examples \ref{ex:bounds:rsm} and \ref{ex:monom-bound}.
	Applying Algorithm~\ref{alg:rsm-rule} on $\Loop$ leads to $E = - \frac{x_i}{4} -\frac{i^2}{2} - i - \frac{1}{2}$.
	We obtain the upper bounding function $u(i) := -i^2$ for $E$.
	Because $\lim_{i \to \infty} u(i) < 0$, Algorithm~\ref{alg:rsm-rule} returns true.
	This is valid because $u(i)$ having a negative limit witnesses that $E$ is eventually bounded by a negative constant and therefore is eventually an \gls{rsm}.
\end{example}

We recall that all functions arising from $\Loop$ are exponential polynomials (see Section~\ref{subsec:bounds}) and that limits of exponential polynomials are computable~\cite{gruntz1996}. 
Therefore, the termination of Algorithm~\ref{alg:rsm-rule} is guaranteed and its correctness is stated next.

\begin{theorem}[Correctness of Algorithm~\ref{alg:rsm-rule}] 
	\label{th:correctness:rsm-rule}
	If Algorithm~\ref{alg:rsm-rule} returns \emph{true} on input $\Loop$, then $\Loop$ with $G_\Loop$ satisfies the \gls{rsm-rule}.
\end{theorem}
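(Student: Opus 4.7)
The plan is to verify that when Algorithm~\ref{alg:rsm-rule} returns \emph{true}, the expression $M := G_\Loop$ satisfies both conditions of the relaxed \gls{rsm-rule} from Theorem~\ref{relax-term} (taking the pure invariant $I$ to be $\mathit{true}$). Condition~\eqref{rsm-rule:term} is immediate and requires no analysis: by the very definition of $G_\Loop = P - Q$ from the guard $\Guard_\Loop = P > Q$, we have $\Guard \iff G > 0$, so $\Guard \land \mathit{true} \implies M > 0$ holds \emph{for all} $i$, not merely eventually.

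The work lies in verifying condition~\eqref{rsm-rule:rsm}, and this is where the algorithm's test at line~\ref{alg:rsm-rule:true} enters. Suppose the algorithm returns \emph{true}, meaning $\lim_{i \to \infty} u(i) < 0$, where $u(i) = \ubf(E)$ is the upper bounding function produced for the martingale expression $E = \E(G_{i+1} - G_i \mid \F_i)$ computed at line~\ref{alg:rsm-rule:E}. Recall that, by assumption from Section~\ref{subsec:bounds}, bounding functions returned by our procedures are monotone and either non-negative or non-positive. Since $u$ has a strictly negative limit, $u$ must be non-positive and monotone, hence there exist $i_1 \in \N$ and $c > 0$ such that $u(i) \leq -c$ for all $i \geq i_1$.

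Next I would invoke the definition of upper bounding function: eventually and almost surely, conditioned on non-termination, $E_i \leq \alpha \cdot u(i)$ for some $\alpha \in \R^+$. Picking $i_0 \geq i_1$ beyond which both this inequality and the bound $u(i) \leq -c$ hold, we obtain
\[
    \P\bigl(\E(G_{i+1} - G_i \mid \F_i) \leq -\alpha c \,\bigm|\, T^{\neg \Guard} > i\bigr) = 1 \quad \text{for all } i \geq i_0.
\]
Setting $\epsilon := \alpha c > 0$, this is precisely the eventual \gls{rsm} condition from Theorem~\ref{relax-term}: whenever $\Guard_i$ holds (equivalently, the looping time has not yet been reached), we have $\E(M_{i+1} - M_i \mid \F_i) \leq -\epsilon$. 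Both conditions of the relaxed \gls{rsm-rule} being satisfied, $\Loop$ is \gls{past}.

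The only subtlety, and the step I expect to require the most care in a fully rigorous write-up, is aligning the ``eventually, almost surely, conditional on non-termination'' quantifier in the definition of upper bounding function with the ``$\Guard_i \land I_i \implies \ldots$'' form required by the proof rule. This is really a bookkeeping issue: the conditioning on $T^{\neg \Guard} > i$ exactly matches the premise $\Guard_i$ of the rule, so no probability mass on terminating runs is lost, and the pure invariant $I = \mathit{true}$ imposes no further restriction.\qed
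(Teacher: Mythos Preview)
Your proposal is correct and follows essentially the same approach as the paper's own proof: both unwind the definition of an upper bounding function to get $\P(E_i \leq \alpha\cdot u(i)\mid T^{\neg\Guard}>i)=1$ eventually, use the negative limit to force $u(i)\leq -\epsilon$ eventually, and combine the two to conclude that $G$ is eventually an $\alpha\epsilon$-ranking supermartingale. Your write-up is slightly more detailed (you explicitly dispatch condition~\eqref{rsm-rule:term} and discuss the alignment of the conditioning with the premise $\Guard_i$), but the core argument is the same.
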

\begin{proof}
	When returning \emph{true} at line~\ref{alg:rsm-rule:true} we have $\P(E_i \leq \alpha \cdot u(i) \mid T^{\neg \Guard} > i) = 1$ for all $i \geq i_0$ and some $i_0 \in \N$, $\alpha \in \R^+$.
	Moreover, $u(i) < -\epsilon$ for all $i \geq i_1$ for some $i_1 \in \N$, by the definition of $\lim$.
	From this follows that $\forall i \geq \max(i_0, i_1)$ almost surely $\Guard_i \implies \E(G_{i+1} - G_i \mid \F_i) \leq - \alpha {\cdot} \epsilon$, which means $G$ is eventually an \gls{rsm}.\qed
\end{proof}

Our approach proving \gls{ast} using the \gls{sm-rule} is captured with Algorithm~\ref{alg:sm-rule}.

\begin{algorithm}
	\KwIn{Prob-solvable loop $\Loop$}
	\KwOut{If \emph{true}, $\Loop$ with $G$ satisfies the \gls{sm-rule} with constant $d$ and $p$; hence $\Loop$ is \gls{ast}}
	
	$E := \E(G_{i+1} - G_i \mid \F_i)$ \label{alg:sm-rule:E}
	
	$u(i) := \ubf(E)$ \label{alg:sm-rule:u}
	
	\lIf{not eventually $u(i) \leq 0$}{ \label{alg:sm-rule:if}
		\Return false
	}
	
	\For{$B \in \supp(\Update_\Loop^G)$}{ \label{alg:sm-rule:for}
		
		$d(i) := \ubf(B - G)$ \label{alg:sm-rule:d}
		
		$\limit := \lim_{i \to \infty} d(i)$
		
		\lIf{$\limit < 0$}{
			\Return true \label{alg:sm-rule:true}
		}
	}
	
	\Return false
	
	\caption{Supermartingale-Rule for proving \gls{ast}}
	\label{alg:sm-rule}
\end{algorithm}

\begin{example}[Algorithm~\ref{alg:sm-rule}]
	Let us illustrate Algorithm~\ref{alg:sm-rule} for the Prob-solvable loop $\Loop$ from Figure~\ref{fig:limit-sm-rule}:
	Applying Algorithm~\ref{alg:sm-rule} on $\Loop$ yields $E \equiv 0$ and $u(i) = 0$.
	The expression $G$ ($=x$) has two branches.
	One of them is $x_i - y_i + 4$, which occurs with probability $\nicefrac{1}{2}$.
	When the for-loop of Algorithm~\ref{alg:sm-rule} reaches this branch $B = x_i - y_i + 4$ on line~\ref{alg:sm-rule:for}, it computes the difference $B - G = -y_i + 4$.
	An upper bounding function for $B - G$ is given by $d(i) = -i$.
	Because $\lim_{i \to \infty} d(i) < 0$, Algorithm~\ref{alg:sm-rule} returns true.
	This is valid because of the branch $B$ witnessing that $G$ eventually decreases by at least a constant with probability $\nicefrac{1}{2}$.
	Therefore, all conditions of the \gls{sm-rule} are satisfied and $\Loop$ is \gls{ast}.
\end{example}

\begin{theorem}[Correctness of Algorithm~\ref{alg:sm-rule}]
	\label{th:correctness:sm-rule}
	If Algorithm~\ref{alg:sm-rule} returns \emph{true} on input $\Loop$, then $\Loop$ with $G_\Loop$ satisfies the \gls{sm-rule} with constant $d$ and $p$.
\end{theorem}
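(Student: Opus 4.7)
The plan is to mirror the structure of the proof of Theorem~\ref{th:correctness:rsm-rule} and show that, whenever Algorithm~\ref{alg:sm-rule} returns \emph{true}, all three conditions of the \gls{sm-rule} (Theorem~\ref{th:sm-rule}) hold eventually, with a \emph{constant} decrease and a \emph{constant} probability, so that the relaxed rule of Theorem~\ref{relax-term} certifies \gls{ast} of $\Loop$. The Termination condition $\Guard \implies G > 0$ is immediate, since $\Guard_\Loop$ has the form $P > Q$ and $G := P - Q$.

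For the \gls{sm} Condition, I would observe that Algorithm~\ref{alg:sm-rule} did not exit at line~\ref{alg:sm-rule:if}, hence the upper bounding function $u(i) = \ubf(E)$ of the martingale expression $E = \E(G_{i+1} - G_i \mid \F_i)$ is eventually non-positive. Unfolding the definition of an upper bounding function yields indices $i_0 \in \N$ and $\alpha \in \R^+$ with $\P(E_i \leq \alpha \cdot u(i) \mid T^{\neg \Guard} > i) = 1$ for all $i \geq i_0$; combined with $u(i) \leq 0$ eventually, this gives $\Guard_i \implies \E(G_{i+1} - G_i \mid \F_i) \leq 0$ for all sufficiently large $i$, which is the \gls{sm} Condition in its relaxed form.

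The crux is the Decrease Condition. The \emph{true} return at line~\ref{alg:sm-rule:true} supplies a branch $B \in \supp(\Update_\Loop^G)$ for which the upper bounding function $d(i) = \ubf(B - G)$ satisfies $\lim_{i \to \infty} d(i) < 0$. From this I would extract constants $\gamma, \alpha' \in \R^+$ and $i_1 \in \N$ with $d(i) \leq -\gamma$ and, conditional on $T^{\neg \Guard} > i$, $B_i - G_i \leq \alpha' \cdot d(i) \leq -\alpha'\gamma$ almost surely for $i \geq i_1$. Setting $p := \Update_\Loop^G(B) > 0$ and the constant $d_0 := \alpha'\gamma > 0$, the very construction of $\Update_\Loop^G$ gives $\P(G_{i+1} - G_i = B_i - G_i \mid \F_i) \geq p$ on the event $\Guard_i$, so $\Guard_i \implies \P(G_{i+1} - G_i \leq -d_0 \mid \F_i) \geq p$ for all sufficiently large $i$. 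Taking $i^\ast := \max(i_0, i_1)$ aligns both witnesses, and an appeal to Theorem~\ref{relax-term} closes the argument.

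The main obstacle I anticipate is the bookkeeping in the Decrease step: translating the ``almost sure, eventual, modulo a positive multiplicative constant'' guarantee packaged inside $\ubf(B-G)$ into a deterministic almost-sure bound $B_i - G_i \leq -d_0$ that is uniform in $i$, and separately justifying that, conditional on $\F_i$ and on the guard holding, the probability that branch $B$ is chosen in a single step equals $\Update_\Loop^G(B)$ irrespective of the current state. The \gls{sm} step is easier because only the sign, not a strict negative gap, is needed, so the $\alpha$ factor from the bounding function is harmless there.
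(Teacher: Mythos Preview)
Your proposal is correct and follows essentially the same route as the paper's proof: first argue that $G$ is eventually a supermartingale via the upper bounding function $u(i)$ of the martingale expression, then use the branch $B$ with $\lim_{i\to\infty} d(i) < 0$ to extract a constant decrease $\alpha\cdot\epsilon$ occurring with probability at least $\Update_\Loop^G(B)>0$, and conclude via the relaxed \gls{sm-rule}. Your write-up is more explicit about the constants and the Termination condition (which the paper handles once and for all earlier when choosing $M:=G$), but the structure and key ideas coincide.
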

\ifshort
The proof of Theorem~\ref{th:correctness:sm-rule}, as well as of
Theorem~\ref{th:correctness:r-ast-rule},
are similar to the one of Theorem~\ref{th:correctness:rsm-rule} and 
can be found in~\cite{moosbrugger2020automated}.
\else
\begin{proof}
	Similarly as for the correctness of Algorithm~\ref{alg:rsm-rule}, $G$ is a supermartingale if Algorithm~\ref{alg:sm-rule} returns \emph{true}.
	Moreover, there is a branch $B \in \supp(\Update_\Loop^G)$ such that $G$ changes eventually and almost surely by at most $\alpha \cdot d(i)$, for some $\alpha \in \R^+$.
	In addition, because $\lim_{i \to \infty} d(i) < 0$, it follows that $d(i) \leq - \epsilon$ for all $i \geq i_0$ for some $i_0 \in \N$, $\epsilon \in \R^+$.
	Therefore, eventually $G$ decreases by at least $\alpha \cdot \epsilon$ with probability at least $\Update_\Loop^G(B) > 0$.
	Hence, all conditions of the \gls{sm-rule} are satisfied.\qed
\end{proof}
\fi

As established in Section~\ref{section:relaxations}, the relaxation of the \gls{r-ast-rule} requires that there is a positive probability of reaching the iteration $i_0$ after which the conditions of the proof rule hold.
Regarding automation, we strengthen this condition by ensuring that there is a positive probability of reaching any iteration, i.e. $\forall i \in \N : \P(\Guard_i) > 0$.
Obviously, this implies $\P(\Guard_{i_0}) > 0$.
Furthermore, with $\canreachanyiter(\Loop)$ we denote a computable under-approximation of $\forall i \in \N : \P(\Guard_i) > 0$.
That means, $\canreachanyiter(\Loop)$ implies $\forall i \in \N : \P(\Guard_i) > 0$.
Our approach proving non-\gls{ast} is summarized in Algorithm~\ref{alg:r-ast-rule}.

\begin{algorithm}
	\KwIn{Prob-solvable loop $\Loop$}
	\KwOut{if \emph{true}, $\Loop$ with $-G$ satisfies the \gls{r-ast-rule}; hence $\Loop$ is not \gls{ast}}
	
	$E := \E(-G_{i+1} + G_i \mid \F_i)$ \label{alg:r-ast-rule:E}
	
	$u(i) := \ubf(E)$ \label{alg:r-ast-rule:u}
	
	\lIf{not eventually $u(i) \leq 0$}{ \label{alg:r-ast-rule:if-u}
		\Return false
	}
	
	\lIf{$\neg \canreachanyiter(\Loop)$}{ \label{alg:r-ast-rule:if-c}
		\Return false
	}
	
	$\epsilon(i) := - u(i)$ \label{alg:r-ast-rule:e}
	
	\lIf{$\epsilon(i) \not\in \Omega(1)$}{ \label{alg:r-ast-rule:if-e-const}
		\Return false
	}
	
	$\differences := \{ B + G \mid B \in \supp(\Update_\Loop^{-G}) \}$ \label{alg:r-ast-rule:differences}
	
	$\diffbounds := \{ \abf(d) \mid d \in \differences \}$ \label{alg:r-ast-rule:bounds}
	
	$c(i) := \dominating(\diffbounds)$ \label{alg:r-ast-rule:c_i}
	
	\Return $c(i) \in O(1)$ \label{alg:r-ast-rule:true}
	
	\caption{Repulsing-AST-Rule for proving non-\gls{ast}}
	\label{alg:r-ast-rule}
\end{algorithm}

\begin{example}[Algorithm~\ref{alg:r-ast-rule}]
	Let us illustrate Algorithm~\ref{alg:r-ast-rule} for the Prob-solvable loop $\Loop$ from Figure~\ref{fig:limit-sm-rule}:
	Applying Algorithm~\ref{alg:r-ast-rule} on $\Loop$ leads to $E = \frac{y_i}{6} - \frac{1}{3} = \frac{2^{-i}}{3} - \frac{1}{3}$ and to the upper bounding function $u(i) = -1$ for $E$ on line~\ref{alg:r-ast-rule:u}.
	Therefore, the if-statement on line~\ref{alg:r-ast-rule:if-u} is not executed, which means $-G$ is eventually a $\epsilon$-repulsing supermartingale.
	Moreover, with a simple static analysis of the loop, we establish $\canreachanyiter(\Loop)$ to be true, as there is a positive probability that the loop guard does not decrease.
	Thus, the if-statement on line~\ref{alg:r-ast-rule:if-c} is not executed.
	Also, the if-statement on line~\ref{alg:r-ast-rule:if-e-const} is not executed, because $\epsilon(i) = -u(i) = 1$ is constant and therefore in $\Omega(1)$.
	$E$ eventually decreases by $\epsilon = 1$ (modulo a positive constant factor), because $u(i) = -1$ is an upper bounding function for $E$.
	We have $\differences = \{ 1 - \frac{y_i}{2}, 1 + \frac{y_i}{2} \}$.
	Both expressions in $\differences$ have an absolute bounding function of $1$.
	Therefore, $\diffbounds = \{1\}$.
	As a result on line~\ref{alg:r-ast-rule:c_i} we have $c(i) = 1$, which eventually and almost surely is an upper bound on $|-G_{i+1} + G_i|$ (modulo a positive constant factor).
	Therefore, the algorithm returns true.
	This is correct, as all the preconditions of the \gls{r-ast-rule} are satisfied (and therefore $\Loop$ is not \gls{ast}).
\end{example}

\begin{theorem}[Correctness of Algorithm~\ref{alg:r-ast-rule}]
	\label{th:correctness:r-ast-rule}
	If Algorithm~\ref{alg:r-ast-rule} returns \emph{true} on input $\Loop$, then $\Loop$ with $-G_\Loop$ satisfies the \gls{r-ast-rule}.
\end{theorem}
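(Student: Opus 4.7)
The plan is to verify that each of the four conditions of the \gls{r-ast-rule} (Theorem~\ref{th:repsm-ast-rule}) holds for $\Loop$ with $M := -G_\Loop$ under the relaxation of Theorem~\ref{relax-nonterm}, whenever Algorithm~\ref{alg:r-ast-rule} returns \emph{true}. The proof has the same flavor as the proofs of Theorems~\ref{th:correctness:rsm-rule} and~\ref{th:correctness:sm-rule}: translate each algorithmic check into an eventually-and-almost-surely statement about the martingale expression and the one-step differences, and then invoke the relaxed rule.

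First, I would dispatch the trivial cases. Condition~\eqref{repsm-ast-rule:cond:pos} (Non-Termination) holds by construction, since $\Guard$ is equivalent to $G > 0$; hence $\neg \Guard$ implies $-G \geq 0$, i.e.\ $M \geq 0$. Next, for the reachability hypothesis of Theorem~\ref{relax-nonterm}, I would use that the algorithm reached line~\ref{alg:r-ast-rule:true}, so $\canreachanyiter(\Loop)$ is true; by its under-approximation property, $\P(\Guard_i) > 0$ for every $i \in \N$, and since $M_i < 0$ is equivalent to $G_i > 0$, we get $\P(M_{i_0} < 0) > 0$ for every $i_0$. This also handles Condition~\eqref{repsm-ast-rule:cond:neg} in the relaxed form (the ``Negative'' condition becomes $\P(M_{i_0} < 0) > 0$ under Theorem~\ref{relax-nonterm}).

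The two substantive steps concern the \gls{rsm} condition and the bounded-differences condition. For Condition~\eqref{repsm-ast-rule:cond:rsm}, the algorithm computed $u(i) = \ubf(E)$ with $E = \E(-G_{i+1}+G_i \mid \F_i)$, passed the check that $u(i) \leq 0$ holds eventually, and ensured $\epsilon(i) = -u(i) \in \Omega(1)$. The latter gives constants $\epsilon_0 > 0$ and $i_1 \in \N$ with $u(i) \leq -\epsilon_0$ for $i \geq i_1$; combined with the defining property of an upper bounding function, this yields almost surely (and eventually) $E_i \leq \alpha \cdot u(i) \leq -\alpha\epsilon_0$ for some $\alpha > 0$, so $M = -G$ satisfies the eventual $(\alpha\epsilon_0)$-repulsing supermartingale condition. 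For Condition~\eqref{repsm-ast-rule:cond:bounded}, each element of $\differences$ is a branch of $M_{i+1} - M_i$, so the collection $\diffbounds$ of absolute bounding functions covers the almost-sure magnitude of the one-step difference; $c(i) = \dominating(\diffbounds) \in O(1)$ then yields a uniform constant bound $|M_{i+1}-M_i| \leq \alpha' c(i) \leq c$ holding eventually and almost surely.

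The main obstacle, as in the companion proofs, is bookkeeping the various ``eventually'' thresholds: each bounding function has its own index from which the almost-sure inequality holds, each limit/$\Omega(1)$/$O(1)$ check contributes another, and the relaxed rule is stated with a single common $i_0$. I would take the maximum of these indices as the $i_0$ required by Theorem~\ref{relax-nonterm}, noting that the reachability conclusion $\P(M_{i_0} < 0) > 0$ holds for \emph{every} $i_0$ and is thus stable under this choice. Once the thresholds are aligned, the relaxed \gls{r-ast-rule} applies and delivers non-\gls{ast} of $\Loop$.\qed
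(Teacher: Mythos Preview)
Your proposal is correct and follows essentially the same approach as the paper's proof: verify the reachability hypothesis via $\canreachanyiter$, use the upper bounding function $u(i)$ together with $\epsilon(i)\in\Omega(1)$ to obtain the eventual $\epsilon$-repulsing supermartingale condition, and use $c(i)\in O(1)$ with the absolute bounding functions on $\differences$ to obtain eventual $c$-bounded differences. Your version is slightly more explicit than the paper's in spelling out the trivial Non-Termination condition and in tracking how the various ``eventually'' thresholds are merged into a single $i_0$, but the underlying argument is the same.
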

\ifshort\else
\begin{proof}
	With the same reasoning as for the correctness of Algorithm~\ref{alg:sm-rule}, $-G$ is a supermartingale if Algorithm~\ref{alg:r-ast-rule} returns \emph{true}.
	Moreover, the condition $\P(-G_{i_0} < 0) > 0$ of the \gls{r-ast-rule} is satisfied, due to the under-approximation \\$\canreachanyiter(\Loop)$ and the if-statement on line \ref{alg:r-ast-rule:if-c}.
	The function $u(i)$ is an upper bounding function for $\E(-G_{i+1} + G_i \mid \F_i)$.
	Hence, eventually and almost surely $\E(-G_{i+1} + G_i \mid \F_i) \leq - \alpha \cdot \epsilon(i)$ for $\epsilon(i) := -u(i)$ and some $\alpha \in \R^+$.
	The if-statement at line~\ref{alg:r-ast-rule:if-e-const} ensures that $\epsilon(i)$ is lower bounded by a constant.
	Therefore, $-G$ eventually is an $(\alpha \cdot \epsilon)$-repulsing supermartingale.
	The function $c(i)$, assigned to $\dominating(\diffbounds)$, is a function dominating absolute bounding functions of all branches of $-G_{i+1} + G_i$.
	Consequently, $c(i)$ is a bound on the differences of $G$, i.e. eventually and almost surely $|-G_{i+1} + G_i| \leq \beta \cdot c(i)$ for some $\beta \in \R^+$.
	Algorithm~\ref{alg:r-ast-rule} returns true only if $c(i)$ can be bounded by a constant which in turn means $G$ has $(\beta \cdot c)$-bounded differences.
	Thus, if Algorithm~\ref{alg:r-ast-rule} returns \emph{true}, all preconditions of the \gls{r-ast-rule} are satisfied.\qed
\end{proof}
\fi

\ifshort
Because the \gls{r-past-rule} is a slight variation of the \gls{r-ast-rule}, Algorithm~\ref{alg:r-ast-rule} can be slightly modified to yield a procedure for the \gls{r-past-rule}.
An algorithm for the \gls{r-past-rule} is provided in \cite{moosbrugger2020automated}.
\fi

\ifshort\else
We finally provide Algorithm~\ref{alg:r-past-rule} for the \gls{r-past-rule}.
The algorithm is a variation of Algorithm~\ref{alg:r-ast-rule} (for the \gls{r-ast-rule}).
The if-statement on line~\ref{alg:r-past-rule:mart} forces $-G$ to be a martingale.
Therefore, after the if-statement $-G$ is an $\epsilon$-repulsing supermartingale with $\epsilon = 0$.

\begin{algorithm}
	\KwIn{Prob-solvable loop $\Loop$}
	\KwOut{If \emph{true}, $\Loop$ with $-G$ satisfies the \gls{r-past-rule}; hence $\Loop$ is not \gls{past}}
	
	$E := \E(-G_{i+1} + G_i \mid \F_i)$
	
	\lIf{$E \not\equiv 0$}{ \label{alg:r-past-rule:mart}
		\Return false
	}
	
	\lIf{$\neg \canreachanyiter(\Loop)$}{
		\Return false
	}
	
	$\differences := \{ B + G \mid B \in \supp(\Update_\Loop^{-G}) \}$
	
	$\diffbounds := \{ \abf(d) \mid d \in \differences \}$
	
	$c(i) := \dominating(\diffbounds)$
	
	\Return $c(i) \in O(1)$ \label{alg:r-past-rule:const}
	
	\caption{Repulsing-PAST-Rule for proving non-\gls{past}}
	\label{alg:r-past-rule}
\end{algorithm}
\fi

\subsection{Ruling out Proof Rules for Prob-Solvable Loops}
\label{subsec:rule-out-rules}

A question arising when combining our algorithmic approaches from Section~\ref{subsec:algorithms-proof-rules} into a unifying framework is that, given a Prob-solvable loop $\Loop$, what algorithm to apply first for determining $\Loop$'s termination behavior?
In~\cite{bartocci_automatic_2019} the authors provide an algorithm for computing an algebraically closed-form of $\E(M_i)$, where $M$ is a polynomial over $\Loop$'s variables.
The following lemma explains how the expression $\E(M_{i+1} - M_i)$ relates to the expression $\E(M_{i+1} - M_i \mid \F_i)$.
\ifshort
The lemma follows from the monotonicity of $\E$.
\fi

\begin{lemma}[Rule out Rules for $\Loop$]
	\label{lemma:rule-out-rules}
	Let $(M_i)_{i \in \N}$ be a stochastic process.
	If $\E(M_{i+1} - M_i \mid \F_i) \leq - \epsilon$ then $\E(M_{i+1} - M_i) \leq - \epsilon$, for any $\epsilon \in \R^+$.
\end{lemma}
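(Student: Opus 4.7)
The plan is to obtain the unconditional bound from the conditional bound via the tower property of conditional expectation combined with monotonicity, which is exactly the hint the paper gives.

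First, I would invoke the tower property (a standard fact about conditional expectation, see e.g.\ the reference \cite{kemeny_denumerable_1976} already used in the preliminaries): for any integrable random variable $X$ and any sub-$\sigma$-algebra $\mathcal{F}_i$,
\[
\E(X) \;=\; \E\!\bigl(\E(X \mid \mathcal{F}_i)\bigr).
\]
Applying this to $X := M_{i+1} - M_i$ yields
\[
\E(M_{i+1} - M_i) \;=\; \E\!\bigl(\E(M_{i+1} - M_i \mid \mathcal{F}_i)\bigr).
\]

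Next I would use the hypothesis, which tells us that the $\mathcal{F}_i$-measurable random variable $\E(M_{i+1} - M_i \mid \mathcal{F}_i)$ is almost surely at most $-\epsilon$. By monotonicity of the (unconditional) expectation, if $Y \leq c$ almost surely for a constant $c$, then $\E(Y) \leq c$. Taking $Y := \E(M_{i+1} - M_i \mid \mathcal{F}_i)$ and $c := -\epsilon$, we conclude
\[
\E\!\bigl(\E(M_{i+1} - M_i \mid \mathcal{F}_i)\bigr) \;\leq\; -\epsilon,
\]
and chaining this with the previous equality gives $\E(M_{i+1} - M_i) \leq -\epsilon$, as required.

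There is no real obstacle here: the proof is essentially a one-line application of the tower property plus monotonicity, provided that $M_{i+1} - M_i$ is integrable (which is needed for the conditional expectation $\E(M_{i+1} - M_i \mid \mathcal{F}_i)$ to be well-defined in the first place, and is implicit in the statement). The only subtlety worth flagging is that the hypothesis ``$\E(M_{i+1} - M_i \mid \mathcal{F}_i) \leq -\epsilon$'' must be interpreted in the almost-sure sense, since conditional expectations are only defined up to null sets (cf.\ Definition~\ref{def:CondExpVal} in the short version, or the analogous definition otherwise). This almost-sure inequality is preserved under the unconditional expectation, which is exactly what the argument needs.
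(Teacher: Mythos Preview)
Your proposal is correct and follows essentially the same route as the paper: both arguments combine monotonicity of $\E$ with the tower property $\E(\E(X\mid\mathcal{F}_i))=\E(X)$, differing only in the order of presentation. Your added remarks on integrability and the almost-sure reading of the conditional inequality make the argument slightly more careful than the paper's three-line chain.
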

\ifshort\else
\begin{proof}
	\begin{flalign*}
		&\E(M_{i+1} - M_i \mid \F_i) \leq - \epsilon &\implies && \hfill \text{(Monotonicity of $\E$)} \\
		&\E(\E(M_{i+1} - M_i \mid \F_i)) \leq \E(-\epsilon) &\iff && \hfill \text{(Property of $\E( \cdot \mid \F_i)$)} \\
		&\E(M_{i+1} - M_i) \leq \E(-\epsilon) &\iff && \hfill \text{($-\epsilon$ is constant)} \\
		&\E(M_{i+1} - M_i) \leq -\epsilon & && \hfill \qed
	\end{flalign*}
\end{proof}
\fi
The contrapositive of Lemma~\ref{lemma:rule-out-rules} provides a criterion to rule out the viability of a given proof rule.
For a Prob-solvable loop $\Loop$, if $\E(G_{i+1} - G_i) \not\leq 0$ then $\E(G_{i+1} - G_i \mid \F_i) \not\leq 0$, meaning $G$ is not a supermartingale.
The expression $\E(G_{i+1} - G_i)$ depends only on $i$ and can be computed by $\E(G_{i+1} - G_i) = \E(G_{i+1}) - \E(G_i)$, where the expected value $\E(G_i)$ is computed as in~\cite{bartocci_automatic_2019}.
Therefore, in some cases, proof rules can automatically be deemed nonviable, without the need to compute bounding functions.
	
	\section{Implementation and Evaluation}
\label{section:implementation}

\subsection{Implementation}

We implemented and combined our algorithmic approaches from Section~\ref{section:algorithms} in the new software tool \amber{} to stand for \emph{Asymptotic Martingale Bounds}.
\amber{} and all benchmarks are available at \url{https://github.com/probing-lab/amber}.
\amber{} uses \textsc{Mora}~\cite{bartocci_automatic_2019}\cite{BartocciKS20} for computing the first-order moments of program variables and the \textsc{diofant} package\footnote{\url{https://github.com/diofant/diofant}} as its computer algebra system.

\paragraph{Computing $\dominating$ and $\dominated$}
The $\dominating$ and $\dominated$ procedures used in Algorithms~\ref{alg:monom-bound} and \ref{alg:r-ast-rule} are implemented by combining standard algorithms for Big-O analysis and bookkeeping of the asymptotic polarity of the input functions.
Let us illustrate this.
Consider the following two input-output-pairs which our implementation would produce: (a) $\dominating(\{i^2 + 10, 10 \cdot i^5 - i^3\}) = i^5$ and (b) $\dominating(\{-i + 50, -i^8 + i^2 - 3 \cdot i^3\}) = -i$.
For (a) $i^5$ is eventually greater than all functions in the input set modulo a constant factor because all functions in the input set are $O(i^5)$.
Therefore, $i^5$ dominates the input set.
For (b), the first function is $O(i)$ and the second is $O(i^8)$.
In this case, however, both functions are eventually negative.
Therefore, $-i$ is a function dominating the input set.
Important is the fact that an exponential polynomial $\sum_{j} p_j(i) \cdot c_j^i$, where $c_j \in \R_0^+$ will always be eventually either only positive or only negative (or $0$ if identical to $0$).

\paragraph{Sign Over-Approximation}
The over-approximation $\oversign(x)$ of the signs of a monomial $x$ used in Algorithm~\ref{alg:monom-bound} is implemented by a simple static analysis:
For a monomial $x$ consisting solely of even powers, $\oversign(x) = \{+\}$.
For a general monomial $x$, if $x_0 \geq 0$ and all monomials on which $x$ depends, together with their associated coefficients are always positive, then $- \not\in \oversign(x)$.
For example, if $\supp(\Update_\Loop^x) = \{ x_i + 2y_i - 3z_i, x_i + u_i \}$, then $- \not\in \oversign(x)$ if $x_0 \geq 0$ as well as $- \not\in \oversign(y)$, $+ \not\in \oversign(z)$ and $- \not\in \oversign(u)$.
Otherwise, $- \in \oversign(x)$.
The over-approximation for $+ \not\in \oversign(x)$ is analogous.

\paragraph{Reachability Under-Approximation}
$\canreachanyiter(\Loop)$, used in Algorithm~\ref{alg:r-ast-rule}, needs to satisfy the property that if it returns true, then loop $\Loop$ reaches any iteration with positive probability.
In \amber{}, we implement this under-approximation as follows:
$\canreachanyiter(\Loop)$ is true if there is a branch $B$ of the loop guard polynomial $G_\Loop$ such that $B - G_{\Loop i}$ is non-negative for all $i \in \N$.
Otherwise, $\canreachanyiter(\Loop)$ is false.
In other words, if $\canreachanyiter(\Loop)$ is true, then in any iteration there is a positive probability of $G_\Loop$ not decreasing.

\paragraph{Bound Computation Improvements}
In addition to Algorithm~\ref{alg:monom-bound} computing bounding
functions for monomials of program variables, \amber{} implements the following refinements:
\begin{enumerate}
	\item
	A monomial $x$ is deterministic, which means it is independent of probabilistic choices, if $x$ has a single branch and only depends on monomials having single branches.
	In this case, the exact value of $x$ in any iteration is given by its first-order moments and bounding functions can be obtained by using these exact representations.
	
	\item Bounding functions for an odd power $p$ of a monomial $x$ can be computed by $u(i)^p$ and $l(i)^p$, where $u(i)$ is an upper- and $l(i)$ a lower bounding function for $x$.
\end{enumerate}
Whenever the above enhancements are applicable, \amber{} prefers them over Algorithm~\ref{alg:monom-bound}.

\subsection{Experimental Setting and Results}

\paragraph{Experimental Setting and Comparisons}
Regarding {programs which are \gls{past}}, we compare \amber{} against the tool \textsc{Absynth}~\cite{ngo_bounded_2018} and the tool in~\cite{chakarov_probabilistic_2013} which we refer to as \textsc{Mgen}.
\textsc{Absynth} uses a system of inference rules over the syntax of probabilistic programs to derive bounds on the expected resource consumption of a program and can, therefore, be used to certify \gls{past}.
In comparison to \amber{}, \textsc{Absynth} requires the degree of the bound to be provided upfront.
Moreover, \textsc{Absynth} cannot refute the existence of a bound and therefore cannot handle programs that are not \gls{past}.
\textsc{Mgen} uses linear programming to synthesize linear martingales and supermartingales for probabilistic transition systems with linear variable updates.
To certify \gls{past}, we extended \textsc{Mgen}~\cite{chakarov_probabilistic_2013} with the SMT solver \textsc{Z3}~\cite{MouraB08} in order to find or refute
the existence of conical combinations of the (super)martingales derived
by \textsc{Mgen} which yield \glspl{rsm}.

With \amberlight{} we refer to a variant of \amber{} without the relaxations of the proof rules introduced in Section~\ref{section:relaxations}.
That is, with \amberlight{} the conditions of the proof rules need to hold for all $i \in \N$, whereas with \amber{} the conditions are allowed to only hold eventually.
For all benchmarks, we compare \amber{} against \amberlight{} to show the effectiveness of the respective relaxations.
For each experimental table (Tables~\ref{tab:bench:past}-\ref{tab:bench:nast}), \benchsucc{} symbolizes that the respective tool successfully certified \gls{past}/\gls{ast}/non-\gls{ast} for the given program;
\benchfail{} means it failed to certify \gls{past}/\gls{ast}/non-\gls{ast}.
Further, \benchna{} indicates the respective tool failed to certify  \gls{past}/\gls{ast}/non-\gls{ast} because the given program is out-of-scope of the tool's capabilities.
Every benchmark has been run on a machine with a $2.2$ GHz Intel i7 (Gen 6) processor and 16 GB of RAM and finished within a timeout of $50$ seconds, where most benchmarks terminated within a few seconds.

\paragraph{Benchmarks}
We evaluated \amber{} against 38 probabilistic programs.
We present our experimental results by separating our benchmarks within three categories: (i) 21 programs which are \gls{past} (Table~\ref{tab:bench:past}), (ii) 11 programs which are \gls{ast} (Table~\ref{tab:bench:ast}) but not necessarily \gls{past}, and (iii) 6 programs which are not \gls{ast} (Table~\ref{tab:bench:nast}).
The benchmarks have either been introduced in the literature on probabilistic programming \cite{ngo_bounded_2018,chakarov_probabilistic_2013,bartocci_automatic_2019,GieslGH19,mciver_new_2017}, are adaptations of well-known stochastic processes or have been designed specifically to test unique features of \amber{}, like the ability to handle polynomial real arithmetic.

The 21 \gls{past} benchmarks consist of 10 programs representing the original benchmarks of \textsc{Mgen}~\cite{chakarov_probabilistic_2013} and \textsc{Absynth}~\cite{ngo_bounded_2018} augmented with 11 additional probabilistic programs.
Not all benchmarks of \textsc{Mgen} and \textsc{Absynth} could be used for our comparison as \textsc{Mgen} and \textsc{Absynth} target related but different computation tasks than certifying \gls{past}.
Namely, \textsc{Mgen} aims to synthesize (super)martingales, but not ranking ones, whereas \textsc{Absynth} focuses on computing bounds on the expected runtime.
Therefore, we adopted \emph{all} (50) benchmarks from \cite{chakarov_probabilistic_2013} (11) and \cite{ngo_bounded_2018} (39) for which the termination behavior is non-trivial.
A benchmark is trivial regarding \gls{past} if either (i) there is no loop, (ii) the loop is bounded by a constant, or (iii) the program is meant to run forever.
Moreover, we cleansed the benchmarks of programs for which the witness for \gls{past} is just a trivial combination of witnesses for already included programs.
For instance, the benchmarks of \cite{ngo_bounded_2018} contain multiple programs that are concatenated constant biased-random-walks.
These are relevant benchmarks when evaluating \textsc{Absynth} for discovering bounds, but would blur the picture when comparing against \amber{} for \gls{past} certification.
With these criteria, 10 out of the 50 original benchmarks of \cite{chakarov_probabilistic_2013} and \cite{ngo_bounded_2018} remain.
We add 11 additional benchmarks which have either been introduced in the literature on probabilistic programming \cite{bartocci_automatic_2019,GieslGH19,mciver_new_2017}, are adaptations of well-known stochastic processes or have been designed specifically to test unique features of \amber{}.
Notably, out of the 50 original benchmarks from \cite{ngo_bounded_2018} and \cite{chakarov_probabilistic_2013}, only 2 remain which are included in our benchmarks and which \amber{} cannot prove \gls{past} (because they are not Prob-solvable).
All our benchmarks are available at \url{https://github.com/probing-lab/amber}.

\begin{table}[t]
	\tablesize
	\centering
	\bgroup
	\def\arraystretch{1.5}
	\begin{tabular}{lcccc}
		\toprule
		Program & \rotatebox{\tablerotation}{\amber{}} & \rotatebox{\tablerotation}{\amberlight{}} & \rotatebox{\tablerotation}{\textsc{Absynth}} & \rotatebox{\tablerotation}{\textsc{Mgen}+\textsc{Z3}} \\
		\midrule
		2d\_bounded\_random\_walk & \benchsucc{} & \benchsucc{} & \benchfail{} & \benchna{} \\
		\hdashline
		biased\_random\_walk\_constant & \benchsucc{} & \benchsucc{} & \benchsucc{} & \benchsucc{} \\
		\hdashline
		biased\_random\_walk\_exp & \benchsucc{} & \benchsucc{} & \benchfail{} & \benchsucc \\
		\hdashline
		biased\_random\_walk\_poly & \benchsucc{} & \benchfail{} & \benchfail{} & \benchfail \\
		\hdashline
		binomial\_past & \benchsucc{} & \benchsucc{} & \benchsucc{} & \benchsucc{} \\
		\hdashline
		complex\_past & \benchsucc{} & \benchfail{} & \benchfail{} & \benchna{} \\
		\hdashline
		consecutive\_bernoulli\_trails & \benchsucc{} & \benchsucc{} & \benchsucc{} & \benchsucc{} \\
		\hdashline
		coupon\_collector\_4 & \benchsucc{} & \benchfail{} & \benchfail{} & \benchsucc{} \\
		\hdashline
		coupon\_collector\_5 & \benchsucc{} & \benchfail{} & \benchfail{} & \benchsucc{} \\
		\hdashline
		dueling\_cowboys & \benchsucc{} & \benchsucc{} & \benchsucc{} & \benchsucc{} \\
		\hdashline
		exponential\_past\_1 & \benchsucc{} & \benchsucc{} & \benchna{} & \benchna{} \\
		\hdashline
	\end{tabular}\hfill
	\begin{tabular}{lcccc}
		\noalign{\vskip 0.9em}
		\midrule
		Program & \rotatebox{\tablerotation}{\amber{}} & \rotatebox{\tablerotation}{\amberlight{}} & \rotatebox{\tablerotation}{\textsc{Absynth}} & \rotatebox{\tablerotation}{\textsc{Mgen}+\textsc{Z3}} \\
		\midrule
		exponential\_past\_2 & \benchsucc{} & \benchsucc{} & \benchna{} & \benchna{} \\
		\hdashline
		geometric & \benchsucc{} & \benchsucc{} & \benchsucc{} & \benchsucc{} \\
		\hdashline
		geometric\_exponential & \benchfail{} & \benchfail{} & \benchfail{} & \benchfail{} \\
		\hdashline
		linear\_past\_1 & \benchsucc{} & \benchsucc{} & \benchfail{} & \benchfail{} \\
		\hdashline
		linear\_past\_2 & \benchsucc{} & \benchsucc{} & \benchfail{} & \benchna{} \\
		\hdashline
		nested\_loops & \benchna{} & \benchna{} & \benchsucc & \benchfail{} \\
		\hdashline
		polynomial\_past\_1 & \benchsucc{} & \benchfail{} & \benchfail{} & \benchna{} \\
		\hdashline
		polynomial\_past\_2 & \benchsucc{} & \benchfail{} & \benchfail{} & \benchna{} \\
		\hdashline
		sequential\_loops & \benchna{} & \benchna{} & \benchsucc{} & \benchfail{} \\
		\hdashline
		tortoise\_hare\_race & \benchsucc{} & \benchsucc{} & \benchsucc{} & \benchsucc \\
		\midrule
		Total \benchsucc{} & 18 & 12 & 8 & 9 \\
		\bottomrule
	\end{tabular}
	\egroup
	\vspace{1em}
	\caption{21 programs which are \gls{past}.}
	\label{tab:bench:past}
\end{table}

\paragraph{Experiments with \gls{past} -- Table~\ref{tab:bench:past}:}
Out of the 21 \gls{past} benchmarks, \amber{} certifies 18 programs.
\amber{} cannot handle the benchmarks \emph{nested\_loops} and \emph{sequential\_loops}, as these examples use nested or sequential loops and thus are not expressible as Prob-solvable loops.
The benchmarks \emph{exponential\_past\_1} and \emph{exponential\_past\_2} are out of scope of \textsc{Absynth} because they require real numbers, while \textsc{Absynth} can only handle integers.
\textsc{Mgen}+\textsc{Z3} cannot handle benchmarks containing non-linear variable updates or non-linear guards.
Table~\ref{tab:bench:past} shows that \amber{} outperforms both \textsc{Absynth} and \textsc{Mgen}+\textsc{Z3} for Prob-solvable loops, even when our relaxed proof rules from Section~\ref{section:relaxations} are not used.
Yet, our experiments show that our relaxed proof rules enable \amber{} to certify 6
examples to be \gls{past}, which could not be proved without these relaxations by \amberlight{}.

\paragraph{Experiments with \gls{ast} -- Table~\ref{tab:bench:ast}:}
We compare \amber{} against \amberlight{} on 11 benchmarks which are \gls{ast} but not necessarily \gls{past} and also cannot be split into \gls{past} subprograms.
Therefore, the \gls{sm-rule} is needed to certify \gls{ast}.
To the best of our knowledge, \amber{} is the first tool able to certify \gls{ast} for such programs.
Existing approaches like \cite{agrawal_lexicographic_2017} and \cite{ChenH20} can only witness \gls{ast} for non-\gls{past} programs, if - intuitively speaking - the programs contain subprograms which are \gls{past}.
Therefore, we compared \amber{} only against \amberlight{} on this set of examples.
The benchmark \emph{symmetric\_2d\_random\_walk}, which \amber{} fails to certify as \gls{ast}, models the symmetric random walk in $\R^2$ and is still out of reach of current automation techniques.
In \cite{mciver_new_2017} the authors mention that a closed-form expression $M$ and functions $p$ and $d$ satisfying the conditions of the \gls{sm-rule} have not been discovered yet.
The benchmark \emph{fair\_in\_limit\_random\_walk} involves non-constant probabilities and can therefore not be modeled as a Prob-solvable loop.

\paragraph{Experiments with non-\gls{ast} -- Table~\ref{tab:bench:nast}:} 
We compare \amber{} against \amberlight{} on 6 benchmarks which are not \gls{ast}.
To the best of our knowledge, \amber{} is the first tool able to certify non-\gls{ast} for such programs, and thus we compared \amber{} only against \amberlight{}. 
In~\cite{chatterjee_stochastic_2017}, where the notion of repulsing supermartingales and the \gls{r-ast-rule} are introduced, the authors also propose automation techniques.
However, the authors of \cite{chatterjee_stochastic_2017} claim that their ``experimental results are basic`` and their computational methods are evaluated on only 3 examples, without having any available tool support.
For the benchmarks in Table~\ref{tab:bench:nast}, the outcomes of \amber{} and \amberlight{} coincide.
The reason for this is \gls{r-ast-rule}'s condition that the martingale expression has to have $c$-bounded differences.
This condition forces a suitable martingale expression to be bounded by a linear function, which is also the reason why \amber{} cannot certify the benchmark \emph{polynomial\_nast}.

\paragraph{Experimental Summary}
Our results from Tables~\ref{tab:bench:past}-\ref{tab:bench:nast} demonstrate that: 
\begin{itemize}
	\item \amber{} outperforms the state-of-the-art in automating \gls{past} certification for Prob-solvable loops (Table~\ref{tab:bench:past}).
	
	\item Complex probabilistic programs which are \gls{ast} and not \gls{past} as well as programs which are not \gls{ast} can automatically be certified as such by \amber{} (Tables~\ref{tab:bench:ast}, \ref{tab:bench:nast}).
	
	\item The relaxations of the proof rules introduced in Section~\ref{section:relaxations} are helpful in automating the termination analysis of probabilistic programs, as evidenced by the performance of \amber{} against \amberlight{} (Tables~\ref{tab:bench:past}-\ref{tab:bench:nast}). 
\end{itemize}

\begin{figure}[t!]
	\begin{minipage}{0.45\linewidth}
		\begin{table}[H]
			\tablesize
			\centering
			\bgroup
			\def\arraystretch{1.5}
			\begin{tabular}{lcc}
				\toprule
				Program & \amber{} & \amberlight{} \\
				\midrule
				fair\_in\_limit\_random\_walk & \benchna{} & \benchna{} \\
				\hdashline
				gambling & \benchsucc{} & \benchsucc{} \\
				\hdashline
				symmetric\_2d\_random\_walk & \benchfail{} & \benchfail{} \\
				\hdashline
				symmetric\_random\_walk\_constant\_1 & \benchsucc{} & \benchsucc{} \\
				\hdashline
				symmetric\_random\_walk\_constant\_2 & \benchsucc{} & \benchsucc{} \\
				\hdashline
				symmetric\_random\_walk\_exp\_1 & \benchsucc{} & \benchfail{} \\
				\hdashline
				symmetric\_random\_walk\_exp\_2 & \benchsucc{} & \benchfail{} \\
				\hdashline
				symmetric\_random\_walk\_linear\_1 & \benchsucc{} & \benchfail{} \\
				\hdashline
				symmetric\_random\_walk\_linear\_2 & \benchsucc{} & \benchsucc{} \\
				\hdashline
				symmetric\_random\_walk\_poly\_1 & \benchsucc{} & \benchfail{} \\
				\hdashline
				symmetric\_random\_walk\_poly\_2 & \benchsucc{} & \benchfail{} \\
				\midrule
				Total \benchsucc{} & 9 & 4 \\
				\bottomrule
			\end{tabular}
			\egroup
			\vspace{1em}
			\caption{11 programs which are \gls{ast} and not necessarily \gls{past}.}
			\label{tab:bench:ast}
		\end{table}
	\end{minipage}\hfill
	\begin{minipage}{0.45\linewidth}
		\vspace{5.65em}
		\begin{table}[H]
			\tablesize
			\centering
			\bgroup
			\def\arraystretch{1.5}
			\begin{tabular}{lcc}
				\toprule
				Program & \amber{} & \amberlight{} \\
				\midrule
				biased\_random\_walk\_nast\_1 & \benchsucc{} & \benchsucc{} \\
				\hdashline
				biased\_random\_walk\_nast\_2 & \benchsucc{} & \benchsucc{} \\	
				\hdashline
				biased\_random\_walk\_nast\_3 & \benchsucc{} & \benchsucc{} \\
				\hdashline
				biased\_random\_walk\_nast\_4 & \benchsucc{} & \benchsucc{} \\
				\hdashline
				binomial\_nast & \benchsucc{} & \benchsucc{} \\
				\hdashline
				polynomial\_nast & \benchfail{} & \benchfail{} \\
				\midrule
				Total \benchsucc{} & 5 & 5 \\
				\bottomrule
			\end{tabular}
			\egroup
			\vspace{1em}
			\caption{6 programs which are not \gls{ast}.}
			\label{tab:bench:nast}
		\end{table}
	\end{minipage}
\end{figure}
	
	\section{Related Work}
\label{section:related-work}

\paragraph*{Proof Rules for Probabilistic Termination}

Several proof rules have been proposed in the literature to provide sufficient conditions for the termination behavior of probabilistic programs. The work of~\cite{chakarov_probabilistic_2013} uses martingale theory to characterize \emph{positive almost sure termination (\gls{past})}.
In particular, the notion of a ranking supermartingale (\gls{rsm}) is introduced together with a proof rule (\gls{rsm-rule}) to certify \gls{past}, as discussed in Section~\ref{section:proof-rules-past}. The approach of~\cite{ferrer_fioriti_probabilistic_2015} extended this method  to include (demonic) non-determinism and continuous probability distributions, showing the completeness of the \gls{rsm-rule} for this program class.
The compositional approach proposed in~\cite{ferrer_fioriti_probabilistic_2015} was further strengthened
in~\cite{Huang19} to a sound approach using the notion of \emph{descent supermartingale map}. 
In~\cite{agrawal_lexicographic_2017}, the authors introduced \emph{lexicographic} \glspl{rsm}.

The \gls{sm-rule} discussed in Section~\ref{section:proof-rules-ast} was introduced in~\cite{mciver_new_2017}. 
It is worth mentioning that this proof rule is also applicable to non-deterministic probabilistic programs.
The work of~\cite{huang_new_2018} presented an independent proof rule based on supermartingales with lower bounds on conditional absolute differences. 
Both proof rules are based on supermartingales and can certify \gls{ast} for programs that are not necessarily \gls{past}.
The approach of~\cite{DBLP:conf/atva/TakisakaOUH18} examined martingale-based techniques for obtaining bounds on reachability probabilities --- and thus termination probabilities---  from an order-theoretic viewpoint.
The notions of \emph{nonnegative repulsing supermartingales} and \emph{$\gamma$-scaled submartingales}, accompanied by sound  and complete proof rules, have also been introduced. 
The \gls{r-ast-rule} from Section~\ref{section:proof-rules-r-ast} was proposed in~\cite{chatterjee_stochastic_2017} mainly for obtaining bounds on the probability of stochastic invariants.

An alternative approach is to exploit weakest precondition techniques for probabilistic programs, as presented in the seminal works~\cite{DBLP:journals/jcss/Kozen81,DBLP:journals/jcss/Kozen85} that can be used to certify \gls{ast}.
The work of~\cite{DBLP:series/mcs/McIverM05} extended this approach to programs with non-determinism and provided several proof rules for termination.
These techniques are purely syntax-based.
In~\cite{DBLP:journals/jacm/KaminskiKMO18} a weakest precondition calculus for obtaining bounds on expected termination times was proposed.
This calculus comes with proof rules to reason about loops.

\paragraph*{Automation of Martingale Techniques}
The work of~\cite{chakarov_probabilistic_2013} proposed an automated procedure --- by using Farkas' lemma --- to synthesize \emph{linear} (super)martingales for probabilistic programs with linear variable updates.
This technique was considered in our experimental evaluation, cf.\ Section~\ref{section:implementation}. 
The algorithmic construction of supermartingales was extended to treat (demonic) non-determinism in~\cite{chatterjee_algorithmic_2018} and to polynomial supermartingales in~\cite{chatterjee_termination_2016} using semi-definite programming. 
The recent work of~\cite{ChenH20} uses $\omega$-regular decomposition to certify \gls{ast}. 
They exploit so-called \emph{localized} ranking supermartingales, which can be synthesized efficiently but must be linear.

\paragraph*{Other Approaches}
Abstract interpretation is used in~\cite{DBLP:conf/sas/Monniaux01}  to prove the probabilistic termination of programs for which the probability of taking a loop $k$ times decreases at least exponentially with $k$. 
In~\cite{esparza_proving_2012}, a sound and complete procedure deciding \gls{ast} is given for probabilistic programs with a finite number of reachable states from any initial state.
The work of \cite{ngo_bounded_2018} gave an algorithmic approach based on potential functions for computing bounds on the expected resource consumption of probabilistic programs.  
In~\cite{DBLP:conf/tacas/LengalLMR17}, model checking is exploited to automatically verify whether a parameterized family of probabilistic concurrent systems is \gls{ast}.

Finally, the class of Prob-solvable loops considered in this paper extends~\cite{bartocci_automatic_2019} to a wider class of loops. While~\cite{bartocci_automatic_2019} focused on computing statistical higher-order moments, our work addresses the termination behavior of probabilistic programs.
The related approach of~\cite{GieslGH19} computes exact expected runtimes of constant probability programs and provides a decision procedure for \gls{ast} and \gls{past} for such programs. 
Our programming model strictly generalizes the constant probability programs of~\cite{GieslGH19}, by supporting polynomial loop guards, updates and martingale expressions.

	
	\section{Conclusion}
\label{section:conclusion}

This paper reported on the automation of termination analysis of probabilistic while-programs whose guards and expressions are polynomial expressions. 
To this end, we introduced mild relaxations of existing proof rules for \gls{ast}, \gls{past}, and their negations, by requiring their sufficient conditions to hold only eventually.
The key to our approach is that the structural constraints of Prob-solvable loops allow for automatically computing almost sure asymptotic bounds on polynomials over program variables.
Prob-solvable loops cover a vast set of complex and relevant probabilistic processes including random walks and dynamic Bayesian networks~\cite{bartocci2020analysis}.
Only two out of 50 benchmarks in \cite{chakarov_probabilistic_2013,ngo_bounded_2018} are outside the scope of Prob-solvable loops regarding \gls{past} certification.
The almost sure asymptotic bounds were used to formalize algorithmic approaches for proving \gls{ast}, \gls{past}, and their negations.
Moreover, for Prob-solvable loops four different proof rules from the literature uniformly come together in our work.

Our approach is implemented in the software tool \amber{} (\href{https://github.com/probing-lab/amber}{github.com/probing-lab/amber}), offering a fully automated approach to probabilistic termination.
Our experimental results show that our relaxed proof rules enable proving probabilistic (non-)\allowbreak{}termination of more programs than could be treated before.
A comparison to the state-of-art in automated analysis of probabilistic termination reveals that \amber{} significantly outperforms related approaches.
To the best of our knowledge, \amber{} is the first tool to automate \gls{ast}, \gls{past}, non-\gls{ast} and non-\gls{past} in a single tool-chain.

There are several directions for future work.
These include extensions to Prob-solvable loops such as symbolic distributions, more complex control flow, and non-determinism.
We will also consider program transformations that translate programs into our format.
Extensions of the \gls{sm-rule} algorithm with non-constant probability and decrease functions are also in our interest.

	%
	%
	%
	\clearpage
	\bibliographystyle{splncs04}
	\bibliography{paper}
	
	\vfill
	
	{\small\medskip\noindent{\bf Open Access} This chapter is licensed under the terms of the Creative Commons\break Attribution 4.0 International License (\url{http://creativecommons.org/licenses/by/4.0/}), which permits use, sharing, adaptation, distribution and reproduction in any medium or format, as long as you give appropriate credit to the original author(s) and the source, provide a link to the Creative Commons license and indicate if changes were made.}
	
	{\small \spaceskip .28em plus .1em minus .1em The images or other third party material in this chapter are included in the chapter's Creative Commons license, unless indicated otherwise in a credit line to the material.~If material is not included in the chapter's Creative Commons license and your intended\break use is not permitted by statutory regulation or exceeds the permitted use, you will need to obtain permission directly from the copyright holder.}
	
	\medskip\noindent\includegraphics{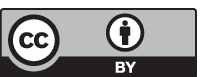}
\end{document}